\theoremstyle{plain}
\newtheorem{theorem}{Theorem}
\newtheorem{proposition}[theorem]{Proposition}
\newtheorem{definition}[theorem]{Definition}
\theoremstyle{definition}
\newtheorem{example}[theorem]{Example}
\numberwithin{exercise}{section}
\numberwithin{equation}{section}
\numberwithin{theorem}{section}
\numberwithin{problem}{section}
\numberwithin{figure}{section}
\DeclareMathOperator{\diag}{diag}
\newcommand{\bs}[1]{{\boldsymbol{#1}}}
\newcommand{\R}{\mathbf{R}}
\newcommand{\D}{\,\mathrm{d}}
\newcommand{\IP}[2]{\left\langle#1\,,#2\right\rangle}
\begin{document}

\title{A mathematical framework of consumer--resource dynamics: How to incorporate interactions between interactions in evolutionary process}

\author{Alexander S. Bratus$^{1,2,3,}$\footnote{e-mail: alexander.bratus@yandex.ru}$\,\,$, Sergei V. Drozhzhin$^{2,3},$ Artem S. Novozhilov$^{4,}$\footnote{e-mail: artem.novozhilov@ndus.edu} \\[3mm]
\textit{\normalsize $^\textrm{\emph{1}}$Marchuk Institute of Numerical Mathematics, Russian Academy of Sciences, 119333, Russia}\\[0mm]
\textit{\normalsize $^\textrm{\emph{2}}$Moscow Center of Fundamental and Applied Mathematics, 119991, Russia}\\[0mm]
\textit{\normalsize $^\textrm{\emph{3}}$Institute of Management and Digital Technologies, Russian University of Transport, 127055, Russia}\\[0mm]
\textit{\normalsize $^\textrm{\emph{4}}$Department of Mathematics, North Dakota State University, Fargo, ND, 58108, USA}}

\date{}

\maketitle

\begin{abstract}A novel mathematical framework is proposed to describe the ecological and evolutionary consequences of consumer--resource interactions. Both the consumer and resource are assumed to consist of several (sub)species, which interact between themselves in addition to incorporating the deleterious effects of the consumer on the resource. Separating the ecological and evolutionary time scales, we allow our mathematical model to evolve, with the evolutionary steps chosen according to the (divergent) objective functions of the consumer and the resource. Numerical simulations show that the model, along with the expected outcomes of either consumer or resource winning the evolutionary battle, is capable of producing also the (quasi)stationary state of consumer--resource coexistence with monotone growth of both the consumer and resource fitnesses. Such stable states highlight the importance of the intra-population interactions, which, despite the opposite evolutionary goals of the consumer and the resource, lead to long term ecological stability.

\paragraph{\small Keywords:} replicator equation, fitness maximization, evolutionary games, population coexistence

\paragraph{\small AMS Subject Classification: } 92D15, 92D25, 92D40

\end{abstract}

\section{Introduction}Consumer--resource interactions have been in the center of evolutionary research for many decades. Mathematical modeling provides an invaluable tool to study various aspects of such interactions, ranging from the basic and now classical Lotka--Volterra predator--prey model to intricate immune system--virus individual based models. An important, and often overlooked thing is that in many situations, both consumer and resource are not unique entities; they are frequently represented by closely related communities of interacting subpopulations, whose structure is shaped by their own connections.

To give one specific example of such interaction, consider the general process of virus infection \cite{payne2022viruses}. After the virus has been attached to a cell and the process of delivery of the virus genome to the cell's cytoplasm has started, it is possible, in general, to have different outcomes of such interaction between the host cell and the virus. The cell--virus interaction may lead to the death of the cell. Or the defence mechanisms of the cell can eventually destroy the virus population. Finally, it is also possible for the virus and cell to coexist in a stable stationary state, at which the infected cells synthesize and release viral particles, continuing at the same time to perform the normal activities necessary for the organism. Two important aspects of the latter outcome are that the cell death is not a necessary condition to release the virus, and the spread of certain viruses can occur by sequential infections of one cell after another. In this case the virus can effectively escape the antibodies and thus can be preserved in the host body for extended periods of time  \cite{stebbing2003virus,wilson2001viral}. Thus in general we have two communities (the cell community and the virus community) which exhibit both intra- and inter- species interactions. From the modeling perspective these interactions are often studied on their own, and yet it is clear that these ``interactions between interactions'' can have a dramatic effect on each other \cite{moller2008interactions}, producing inevitable ecological and evolutionary consequences.

The main goal in this paper is to present a novel mathematical framework to describe such consumer--resource systems, including both internal and external interactions in both communities and eventually allowing to incorporate evolutionary adaptations. As an abstract mathematical description of the host population we consider the so-called open replicator systems \cite{bratus2009stability,pavlovich2012studying,yegorov2020open} (for the precise detail see below), which are based on now classical replicator equation \cite{cressman2014replicator,hofbauer2003egd}
\begin{equation}\label{eq1:1}
  \dot w_i = w_i\Bigl((\bs A \bs w)_i-\IP{\bs{Aw}}{\bs w}\Bigr),\quad i=1,\ldots,n,
\end{equation}
where $\bs w(t)=(w_1(t),\ldots,w_n(t))^\top$ is the vector of (in one possible interpretation) interacting macromolecules at time $t$, the matrix $\bs A=[a_{ij}]$ describes the rates with which the macromolecules of type $j$ catalyze the replication of macromolecules of type $i$, $(\bs A \bs w)_i$ is the $i$-th element of the vector, and $\IP{\cdot}{\cdot}$ is the standard dot product in $\R^n$. As it is straightforward to check, the equation \eqref{eq1:1} is written in the way to keep the simplex $S_n=\{\sum_{i=1}^n w_i=1,\,w_i\geq 0\}$ invariant, and thus we interpret $w_i(t)$ as \textit{relative} concentrations. In such formulation, ecologically, it is impossible to consider the extinction of the community, which, as noted above, is an important outcome. Therefore, as a first step, we introduce the open replicator systems, which are described in Section \ref{sec:2}. These systems inherent the rich dynamical regimes of the classical equation \eqref{eq1:1}, but also naturally allow for all the macromolecule concentrations to be attracted to the origin, which corresponds to the total population collapse. We illustrate the general theory with one specific replicator system -- the hypercycle equation -- whose dynamics is well understood \cite{hofbauer1998ega}. In particular, we show by means of numerical calculations that in the phase space of our system there exists a compact ball, which is positively invariant. We call such systems conditionally permanent.

In Section \ref{sec:3} we couple the dynamics of the open replicator system with another replicator, which plays the role of the consumer (parasite or virus) and relies on the presence of different host (resource) macromolecules to survive. An important property of the interaction between the cell population and virus population is the appearance of the system dynamics with opposite objective functions. While it is certainly true that the real life is more complicated \cite{birch2016natural,bratus2018adaptive, bratus2020geometry} than our description, as a first natural approximation we assume that the ultimate goal of each interacting community is to increase its own average fitness. To accommodate this assumption we split the evolutionary process into consecutive steps. Building on the existing approach to evolutionary adaptation of different replicator systems \cite{bratus2024food,bratus2018evolution,drozhzhin2021fitness}, we explicitly include this evolutionary process with two opposite objective functions in our model as a series of linear programming problems. At each odd step of this process an objective function is chosen to satisfy the evolutionary goals of the cell, and at each even step, it is virus' turn to maximize its average fitness based on the current average fitness of the replicator. This algorithm is described in detail in Section \ref{sec:4}. In Section \ref{sec:5} we demonstrate, by means of numerical examples, that the proposed mathematical framework naturally leads to various evolutionary outcomes. We conclude our paper with discussion of the main findings and possible future directions.

\section{Open replicator systems}\label{sec:2}The goal for this section is to define a system of ordinary differential equations, which inherits a number of important properties of the classical replicator equation \eqref{eq1:1}, and yet allows a possibility for the solutions to be attracted to the origin, which biologically means the total population extinction. Mathematically, it amounts to changing the state space of the system from the simplex to the non-negative cone $\R^n_{+}$ of $\R^n$.

Consider a population of $n$ types of macromolecules. Let $w_i(t)$ be the population size of the $i$-th type at time $t\geq 0$, $i=1,\ldots, n$, and $\bs w(t)=(w_1(t),\ldots,w_n(t))^\top\in\R^n_+$. The interactions (the reaction rates) are defined, as usual, by the elements of real $n\times n$ matrix $\bs A=[a_{ij}]$, where the element $a_{ij}$ describes how type $j$ macromolecule helps the replication of type $i$ macromolecule. The system we consider in the following takes the form
\begin{equation}\label{eq2:1}
    \dot w_i=w_i\Bigl(\psi(\bs w)(\bs{Aw})_i-d_i\Bigr),\quad i=1,\ldots, n.
\end{equation}
Here $\psi(\bs w)=\exp(-\gamma S(\bs w)),\,S(\bs w)=\sum_{i=1}^n w_i,\,\gamma>0$, $(\bs{Aw})_i$ is the $i$-th element of the vector $\bs{Aw}$, and $d_i>0$ are constants that can be interpreted as dissipation rates. Given the initial conditions $w_i(t_0)=w_i^0\geq 0$, we immediately conclude that system \eqref{eq2:1} is invariant in $\R^n_+$. The key difference of system \eqref{eq2:1} from the classical replicator equation \eqref{eq1:1} is that it does not confine the system dynamics to the unit simplex in $\R^n_+$ and allows the simultaneous extinction of all types of macromolecules, $\bs w(t)\to 0$ as $t\to \infty$ since, clearly, $0$ is an asymptotically stable equilibrium point of \eqref{eq2:1}. A similar approach to the quasispecies equation was considered previously in \cite{yegorov2020open}, and a special from of equations \eqref{eq2:1} was analyzed in \cite{pavlovich2012studying}.

To interpret our model biologically we must confirm that the total population size $S(\bs w)$ does not grow without bounds. The following proposition holds, and the proof in given in Appendix \ref{ap:1}.
\begin{proposition}\label{pr:1}The solutions to \eqref{eq2:1} are unique, exit for all $t\geq 0$, and bounded if $\min_i\{d_i\}>0$.
\end{proposition}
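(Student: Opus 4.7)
The plan is to handle the three assertions (uniqueness, global existence, boundedness) in the natural order, and to exploit the fact that the damping term $\psi(\bs w)=\exp(-\gamma S(\bs w))$ eventually kills any polynomial growth coming from the replication term $(\bs{Aw})_i$.

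First I would observe that the right-hand side of \eqref{eq2:1} is $C^\infty$ on all of $\R^n$, so Picard--Lindel\"of gives local existence and uniqueness for any initial condition in $\R^n_+$. Next, I would verify positive invariance of $\R^n_+$: if $w_i(t^*)=0$ for some $t^*$ then $\dot w_i(t^*)=0$, so by uniqueness the coordinate hyperplane $\{w_i=0\}$ is invariant and trajectories starting in $\R^n_+$ remain there. Thus on the solution we always have $w_i\geq 0$.

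The heart of the proof is the a priori bound on $S(\bs w(t))=\sum_i w_i(t)$. A direct computation gives
\begin{equation*}
\dot S = \psi(\bs w)\IP{\bs{Aw}}{\bs w} - \sum_{i=1}^n d_i w_i .
\end{equation*}
Setting $c=\max_{i,j}|a_{ij}|$ and $d_{\min}=\min_i d_i>0$, I would estimate $\IP{\bs{Aw}}{\bs w}=\sum_{i,j}a_{ij}w_iw_j\le c\,S(\bs w)^2$ (valid on $\R^n_+$) and $\sum_i d_i w_i \ge d_{\min} S(\bs w)$, so that
\begin{equation*}
\dot S \le S\bigl(c\,S\,e^{-\gamma S} - d_{\min}\bigr).
\end{equation*}
Because the function $S\mapsto cSe^{-\gamma S}$ tends to $0$ as $S\to\infty$, there exists $S^*>0$ with $cSe^{-\gamma S}\le d_{\min}/2$ for all $S\ge S^*$. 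Consequently $\dot S\le -\tfrac{d_{\min}}{2}S<0$ whenever $S\ge S^*$, which yields the a priori bound $S(\bs w(t))\le\max\{S^*,S(\bs w(t_0))\}$ for all $t$ in the maximal interval of existence.

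Finally, this bound on $S$ controls each component since $0\le w_i\le S$, so the trajectory stays in a compact subset of $\R^n_+$; the standard extension criterion then forces the maximal interval of existence to be $[t_0,\infty)$, completing all three claims. I do not expect a serious obstacle: the only point requiring care is the choice of the norm-like quantity to differentiate, and $S(\bs w)$ works because the exponential weight $\psi(\bs w)$ is a function of $S$ alone, letting the dissipation term $-d_{\min}S$ dominate uniformly once $S$ is large.
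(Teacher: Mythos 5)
Your proof is correct and follows essentially the same route as the paper's: sum the equations, bound $\IP{\bs{Aw}}{\bs w}$ by a constant times $S^2$ on the positive cone, and use the fact that $S^2 e^{-\gamma S}$ cannot outrun the linear dissipation $-d_{\min}S$ to get an a priori bound on $S$ by comparison. Your version is slightly more explicit about positive invariance of $\R^n_+$ and the continuation criterion, but the key estimate is identical.
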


In many cases we are interested in replicator systems, for which \textit{all} the macromolecules survive in the long run. The technical definition in this case is that the system is \textit{permanent} or \textit{uniformly persistent} (see, e.g., \cite{smith2011dynamical} for a careful treatment). In our case \eqref{eq2:1}, due to the attracting nature of the origin, we cannot call this system permanent, using the definition from \cite{smith2011dynamical}. As examples show, however, it is still possible to have multiple instances of dynamics, in which, choosing the right initial conditions, we can expect that all the concentrations are contained in a compact ball in $\R^n_+$ and thus separated from the extinction boundary of $\R^n_+$. Only such systems are considered in the following. Since it is important for the presentation, we give a formal
\begin{definition}An open replicator system \eqref{eq2:1} is called conditionally permanent if there is a non-empty compact set $B$ strictly inside $\R^n_+$, such that if $\bs w(0)\in B$ then $\bs w(t)\in B$ for all $t>0$.
\end{definition}

A necessary condition for \eqref{eq2:1} to be conditionally permanent is that our system must possess a strictly positive equilibrium point $\bs{\hat w}>0$, which means that all $\hat w_i>0$.

To find the equilibria \eqref{eq2:1} we assume for the following that $\bs A$ is invertible. If we want an equilibrium $\bs{\hat w}$ with all the coordinates positive, it must be true that, denoting $\hat S=\sum_{i=1}^n\hat w_i$,
$$
(\bs{A\hat{w}})_i e^{-\gamma \hat S}=d_i,
$$
or, using $\bs d=(d_1,\ldots,d_n)$, in the matrix form
$$
\hat w_i=(\bs A^{-1}\bs d)_ie^{\gamma \hat S}.
$$
Summing all the equations together, we find
$$
\hat S e^{-\gamma\hat S}=\sum_{i=1}^n(\bs{A}^{-1}\bs d)_i=\IP{
\bs{A}^{-1}\bs d}{\bs 1}=a,\quad \bs 1=(1,\ldots, 1)^\top\in\R^n.
$$
Hence, using the explicit form of $\psi(S)$, we conclude that if $a>(\gamma e)^{-1}$ then there are no equilibria with all the coordinates different from zero, and if
\begin{equation}\label{eq2:1a}
    a=\IP{\bs{A}^{-1}\bs d}{\bs 1}\leq (\gamma e)^{-1},
\end{equation}
then it is possible to have a positive equilibrium. More specifically, if $a=(\gamma e)^{-1}$ then such equilibrium is unique, and if $a<(\gamma e)^{-1}$ then it is possible to have two such equilibria $\bs{\hat{w}}^{1,2}$ for which $S(\bs{\hat{w}}^1)<\gamma^{-1}$ and $S(\bs{\hat{w}}^2)>\gamma^{-1}$ respectively. We cannot say much about the properties of these equilibria in full generality, and yet if the system is conditionally permanent then at least one such equilibrium must exist (note that we do not claim anything about its stability). To be precise, we can prove (for the proof see Appendix \ref{ap:1})

\begin{proposition}\label{pr:2}Assume that in system \eqref{eq2:1} it holds that $\det \bs A\neq 0$ and the system is conditionally permanent. Then there exists an equilibrium point $\bs{\hat{w}}\in\R^n_+$ such that $\bs{\hat w}>0$ with coordinates satisfying
\begin{equation}\label{eq2:2}
    \hat w_i=(\bs A^{-1}\bs d)_ie^{\gamma \hat S},\quad \bs d=(d_1,\ldots,d_n)^\top,\quad \hat S=\sum_{i=1}^n\hat w_i.
\end{equation}
\end{proposition}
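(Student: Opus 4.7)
The plan is to view the existence of a positive equilibrium as a fixed point problem for the flow of \eqref{eq2:1} restricted to the invariant set $B$, and to read off the formula \eqref{eq2:2} from the equilibrium conditions. First observe that if $\bs{\hat w}$ is any equilibrium with $\hat w_i>0$ for all $i$, then $\dot w_i=0$ together with $w_i\neq 0$ forces $\psi(\bs{\hat w})(\bs A\bs{\hat w})_i=d_i$ for every $i$, i.e., $\bs A\bs{\hat w}=e^{\gamma\hat S}\bs d$. Since $\det\bs A\neq 0$, this is equivalent to the formula in \eqref{eq2:2}; consequently, all that has to be shown is that a strictly positive equilibrium exists in the first place.

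For the existence, I would use conditional permanence together with a topological fixed point argument. By hypothesis there is a compact positively invariant set $B\subset\Int(\R^n_+)$, and in view of the discussion preceding the definition, $B$ may be taken to be a closed ball, in particular convex. The flow $\phi_t$ of \eqref{eq2:1} is well defined on $B$ by Proposition \ref{pr:1} and, being generated by a smooth vector field, depends continuously on $(t,\bs w)$. For each $t>0$ the time-$t$ map $\phi_t\colon B\to B$ is therefore a continuous self-map of a compact convex body, so by Brouwer's fixed point theorem it has a fixed point $\bs w_t\in B$.

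Now pick a sequence $t_n\downarrow 0$, take the corresponding fixed points $\bs w_{t_n}\in B$, and extract (using compactness of $B$) a subsequence converging to some $\bs w^{\ast}\in B$. A standard argument then upgrades this to a true equilibrium of \eqref{eq2:1}: for any $t>0$ write $t=k_n t_n+r_n$ with integer $k_n\geq 0$ and $r_n\in[0,t_n)$; since $\bs w_{t_n}$ is fixed by $\phi_{t_n}$ and hence by $\phi_{k_n t_n}$, one has $\phi_t(\bs w_{t_n})=\phi_{r_n}(\bs w_{t_n})\to \bs w^{\ast}$ by joint continuity of the flow and $r_n\to 0$, while simultaneously $\phi_t(\bs w_{t_n})\to\phi_t(\bs w^{\ast})$. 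Therefore $\phi_t(\bs w^{\ast})=\bs w^{\ast}$ for every $t\geq 0$, so $\bs w^{\ast}$ is an equilibrium. Because $\bs w^{\ast}\in B\subset\Int(\R^n_+)$ we automatically have $\bs w^{\ast}>0$, and the reduction in the first paragraph yields \eqref{eq2:2}.

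The step I expect to be the main obstacle is the geometry of $B$: the formal definition only requires $B$ to be a compact positively invariant set, whereas Brouwer's theorem needs a convex body (or some substitute such as a set of nonzero Lefschetz number). The authors' informal comment that in their examples $B$ is a compact ball is exactly what the fixed-point step uses. If one wanted a proof not relying on that shape information, the natural workarounds would be either to replace Brouwer with a Lefschetz-type argument on a compact invariant neighbourhood retract, or to build a convex positively invariant subset of $B$ using the Lyapunov-like role played by $S(\bs w)$ in \eqref{eq2:1}; both are genuinely extra work, but neither is needed for the conditionally permanent systems actually considered in this paper.
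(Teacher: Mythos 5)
Your first paragraph (the equilibrium condition plus invertibility of $\bs A$ gives \eqref{eq2:2}) and your $t_n\downarrow 0$ limiting argument are both fine; the problem is the fixed-point step, and the obstacle there is more serious than the convexity caveat you flag. Conditional permanence only hands you \emph{some} nonempty compact positively invariant set $B$ strictly inside $\R^n_+$; the remark before the definition about ``a compact ball'' is informal motivation, not a hypothesis you may import. Worse, nothing in the definition forces $B$ to contain the equilibrium whose existence the proposition asserts. The paper's own Example \ref{ex2:5} shows why this kills the strategy: there the attractor is a limit cycle, so one admissible choice of $B$ is that periodic orbit itself. This $B$ is compact, positively invariant, strictly interior --- and contains no equilibrium whatsoever. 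Both of your proposed repairs fail on it: it has no nontrivial convex invariant subset, and its Euler characteristic is zero, so a Lefschetz argument gives nothing. Any proof that tries to locate $\bs{\hat w}$ \emph{inside} $B$ is therefore aimed at the wrong set; you would first have to manufacture a compact \emph{convex} positively invariant set, which is an additional claim requiring its own proof.

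The paper's argument is of a genuinely different kind: the classical time-average (Lotka--Volterra/replicator) trick. Rescaling time by the positive factor $\psi(\bs w)$, dividing the $i$-th equation by $w_i$ and averaging over $[0,T]$, one uses the fact that an orbit in $B$ is bounded and bounded away from zero, so $\bigl(\log w_i(T)-\log w_i(0)\bigr)/T\to 0$; hence any accumulation point $\bs u$ of the time averages of $\bs w$ satisfies $\bs A\bs u=\beta\bs d$ for some accumulation point $\beta>0$ of the averages of $e^{\gamma S}$. This forces $\bs A^{-1}\bs d>0$ componentwise and (after a Jensen-type estimate comparing the average of $e^{\gamma S}$ with the exponential of the average) yields condition \eqref{eq2:1a}, which is exactly what guarantees that the scalar equation $\hat S e^{-\gamma\hat S}=\IP{\bs A^{-1}\bs d}{\bs 1}$ has a root and hence that \eqref{eq2:2} has a strictly positive solution. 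The candidate equilibrium is thus produced as a limit of time averages along an orbit in $B$, not as a point of $B$ --- which is precisely why that proof is immune to the topological obstruction your approach runs into.
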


In general system \eqref{eq2:1} may also have equilibria with $k$ zero coordinates and $n-k$ non-zero coordinates; in this case, a condition for existence of equilibria, similar to \eqref{eq2:1a}, still holds with minor modifications. For instance, if the we assume that we are looking at equilibria with the first $s$ coordinates zero, and $n-s$ nonzero coordinates let us call it $\bs{\hat{w}}^s$. Denoting $\bs 1^s$ the vector with $s$ first zeroes and $n-s$ ones, similar to the above we find that
\begin{equation}\label{eq2:1b}
    \IP{\bs{A}^{-1}\bs d}{\bs 1^s}\leq (\gamma e)^{-1}.
\end{equation}

As a specific example of the general theory above consider the classical hypercycle equation \cite{Eigen1977,hofbauer1998ega} with the matrix
\begin{equation}\label{eq2:1c}
    \bs A=\begin{bmatrix}
            0 & 0 & \ldots & 0 & 1 \\
            1 & 0 & \ldots & 0 & 0 \\
            \vdots &  &  &  &  \\
            0 &  & \ldots & 1 & 0 \\
          \end{bmatrix}.
\end{equation}
\begin{proposition}\label{pr:3} Consider the open replicator equation \eqref{eq2:1} with matrix \eqref{eq2:1c}. Assume that $d_1=\ldots=d_n=d$ and assume that strict inequality holds in \eqref{eq2:1a}. Then defined above equilibrium $\bs{\hat{w}}^1$ is unstable for all $n$, equilibrium $\bs{\hat{w}}^2$ is asymptotically stable for $n=3,4$ and becomes unstable for $n\geq 5$.
\end{proposition}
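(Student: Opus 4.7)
The plan is to exploit the cyclic symmetry of the hypercycle matrix $\bs A$, together with $d_1=\cdots=d_n=d$, to reduce the question to an elementary spectral computation, with the case $n=4$ requiring a separate centre-manifold argument. Since $\bs A$ is a permutation, $\bs A^{-1}\bs 1=\bs 1$ and hence $a=\IP{\bs A^{-1}\bs d}{\bs 1}=nd$, so the strict inequality in \eqref{eq2:1a} reads $nd<(\gamma e)^{-1}$. The same cyclic symmetry forces any strictly positive solution of the equilibrium condition $e^{-\gamma \hat S}(\bs{A\hat w})_i=d$ to be symmetric, $\hat w_i\equiv\hat w$, reducing the system to the single scalar equation $\hat w\, e^{-\gamma n \hat w}=d$; since $u\mapsto ue^{-u}$ attains $n\gamma d<e^{-1}$ at exactly two points $u_1<1<u_2$, this produces the two equilibria $\bs{\hat w}^{1,2}$ with totals $\hat S_1<\gamma^{-1}<\hat S_2$.

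Next I would linearize at a symmetric equilibrium. Differentiating $f_i(\bs w)=w_i(\psi(\bs w)(\bs{Aw})_i-d)$ and using the vanishing of the bracket at $\bs{\hat w}$ gives $J_{ij}=\hat w\bigl(-\gamma d+e^{-\gamma\hat S}a_{ij}\bigr)$, which, via the relation $\hat w e^{-\gamma\hat S}=d$, simplifies to
\[
  \bs J = d\,\bs A - \gamma d\hat w\,\bs E,\qquad \bs E=\bs 1\bs 1^\top.
\]
The crucial observation is that $\bs A$ and $\bs E$ are simultaneously diagonalized by the discrete Fourier basis $\bs v_k=(1,\omega^k,\ldots,\omega^{(n-1)k})^\top$ with $\omega=e^{2\pi\I/n}$: one has $\bs A\bs v_k=\omega^{-k}\bs v_k$, while $\bs E\bs v_0=n\bs 1$ and $\bs E\bs v_k=0$ for $k\neq 0$ by geometric-series cancellation. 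The spectrum of $\bs J$ is therefore
\[
  \lambda_0=d(1-\gamma\hat S),\qquad \lambda_k=d\,e^{-2\pi\I k/n},\quad k=1,\ldots,n-1,
\]
with real parts $d(1-\gamma\hat S)$ and $d\cos(2\pi k/n)$.

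The instability of $\bs{\hat w}^1$ for every $n$ is now immediate since $\hat S_1<\gamma^{-1}$ forces $\lambda_0>0$. For $\bs{\hat w}^2$ one has $\lambda_0<0$, and stability reduces to the signs of $\cos(2\pi k/n)$: for $n=3$ both nontrivial eigenvalues have real part $-d/2<0$, so asymptotic stability follows from linearization, while for $n\geq 5$ the value at $k=1$ satisfies $\cos(2\pi/n)>0$, producing a spectrally unstable direction. The main obstacle is $n=4$, where $\lambda_1=-\I d$ and $\lambda_3=\I d$ are purely imaginary (alongside $\lambda_0<0$ and $\lambda_2=-d<0$), so the linearization has a two-dimensional centre subspace and Hartman--Grobman does not decide stability. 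To close this case I would carry out the centre-manifold reduction onto the real span of $\bs v_1$ and $\bs v_3$, bring the resulting planar vector field into Poincar\'e normal form, and verify that the first Lyapunov (focal) coefficient produced by the higher-order terms of \eqref{eq2:1} is strictly negative, so that the centre is in fact an asymptotically stable weak focus. This focal computation, though tedious, is the only nonelementary ingredient in the argument.
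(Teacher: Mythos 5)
Your computation follows essentially the same route as the paper's: at the symmetric positive equilibrium the Jacobian is the circulant matrix $\bs J=d\bs A-\gamma d\hat w\,\bs 1\bs 1^\top$, and diagonalizing it in the discrete Fourier basis yields exactly the spectrum the paper records --- your $\lambda_0=d(1-\gamma\hat S)$ coincides with the paper's $\lambda_1=d-nd^2\gamma e^{\gamma \hat S}$ because $de^{\gamma\hat S}=\hat w$ and $n\hat w=\hat S$. Your reduction to symmetric equilibria, the condition $nd<(\gamma e)^{-1}$, the instability of $\bs{\hat w}^1$ for all $n$, the stability for $n=3$, and the instability for $n\geq 5$ are all correct and fully justified by hyperbolic linearization. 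The one genuine gap is $n=4$: you rightly observe that $\lambda_1=-\I d$ and $\lambda_3=\I d$ are purely imaginary, so the equilibrium is non-hyperbolic and asymptotic stability cannot be read off the spectrum; but you then only \emph{announce} a centre-manifold reduction and \emph{assert}, without computing it, that the first Lyapunov coefficient is negative. Until that coefficient is actually produced (or replaced by some other device, e.g.\ a Lyapunov function in the spirit of the $P=\prod_i x_i$ argument used for the classical hypercycle with $n\leq 4$), the $n=4$ claim remains unproved in your write-up. To be fair, the paper's own ``idea of the proof'' glosses over exactly the same point --- it states that ``analogous calculations'' give stability for $n=3,4$ and defers to \cite{pavlovich2012studying} --- so your version is, if anything, more honest in making the non-hyperbolicity explicit; but as a self-contained proof it is incomplete at precisely that step.
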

The idea of the proof is based on the explicit calculation of the Jacobi matrix at the corresponding equilibria, more details are given in Appendix \ref{ap:1}. Note that in the general case $d_i\neq d$ invoking the continuous dependence of the eigenvalues on the elements of the matrix, Proposition \ref{pr:3} can be generalized to a small enough neighborhood of vector $(d,\ldots,d)^\top$. If the condition \eqref{eq2:1a} does not hold, then it can be shown that the origin is globally asymptotically stable (\cite{pavlovich2012studying}).

We do not have analytical proofs for the following facts, but extensive numerical computations suggest that for the open hypercycle system \eqref{eq2:1}, \eqref{eq2:1c} in the case when strict inequality \eqref{eq2:1a} holds, the unstable manifold of $\bs{\hat{w}}^1$ divides the positive cone $\R^n_+$ into two parts $B_1$ and $B_2$. If the initial conditions belong to $B_1$ then asymptotically all solutions tend to the origin, if, however, the initial conditions are chosen from $B_2$, then the system becomes conditionally permanent, i.e., the solutions never approach zero and stay inside $B_2$, which, by Proposition \eqref{pr:2} must contain the equilibrium $\bs{\hat{w}}^2$.

\begin{example}\label{ex2:5}
As a specific example consider the case $n=5$. It is well known that in this case the classical hypercyclic equation is permanent, and, moreover, the internal positive equilibrium is unstable, and there exists an asymptotically stable limit cycle \cite{hofbauer1998ega}. For the open replicator equation we choose $\bs d=(0.0648, 0.0577, 0.0444, 0.439, 0.0468)$ and $\gamma=0.1$. Numerically we observe that at least for some initial conditions the system settles on an asymptotically stable limit cycle (see Fig. \ref{fig2:1}).

\begin{figure}
\centering
\includegraphics[width=0.6\textwidth]{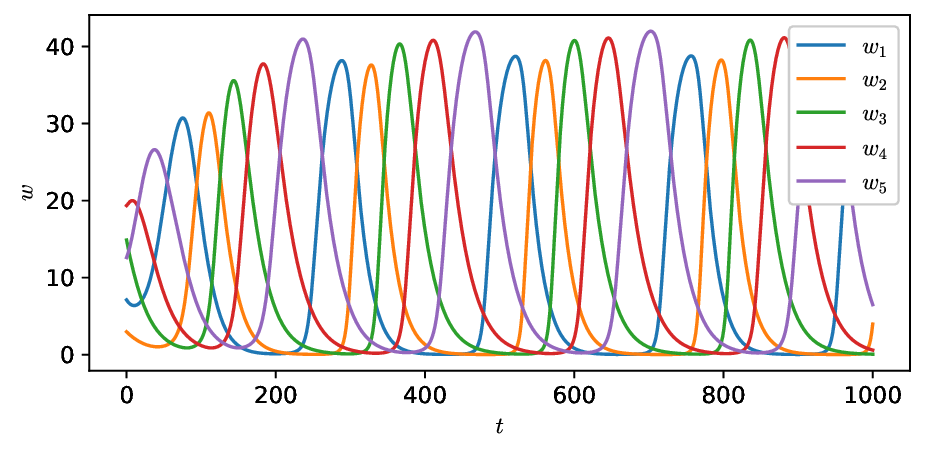}
\caption{Limit cycle in open replicator system with $n=5$. For the exact parameter values see the text.}\label{fig2:1}
\end{figure}

A more careful numerical analysis of this system suggests that the system is conditionally permanent. According to the general theory outlines above, for the chosen parameter values there exist two positive equilibria $\bs{\hat{w}}^1$ and $\bs{\hat{w}}^2$, which are both unstable. However, the unstable manifold of $\bs{\hat w}^1$ divides the phase space into the two connected parts: if the initial conditions belong to the first one then the system dynamics leads to the population extinction ($\bs w(t)\to 0$ as $t\to\infty$); if the initial conditions are picked from the second part, all the concentrations $\bs{w}(t)$ are separated from zero for all future $t$ (see Fig. \ref{fig2:2}, where the basin of attraction of the origin is shown). Thus it is highly probably, based on the observed numerical experiments, that the system is conditionally permanent in this case.
\begin{figure}[!th]
\centering
\includegraphics[width=0.495\textwidth]{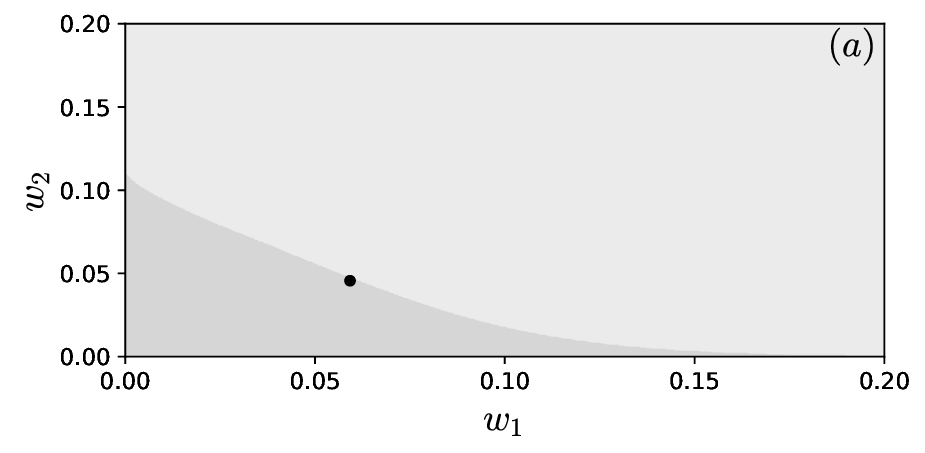}
\includegraphics[width=0.495\textwidth]{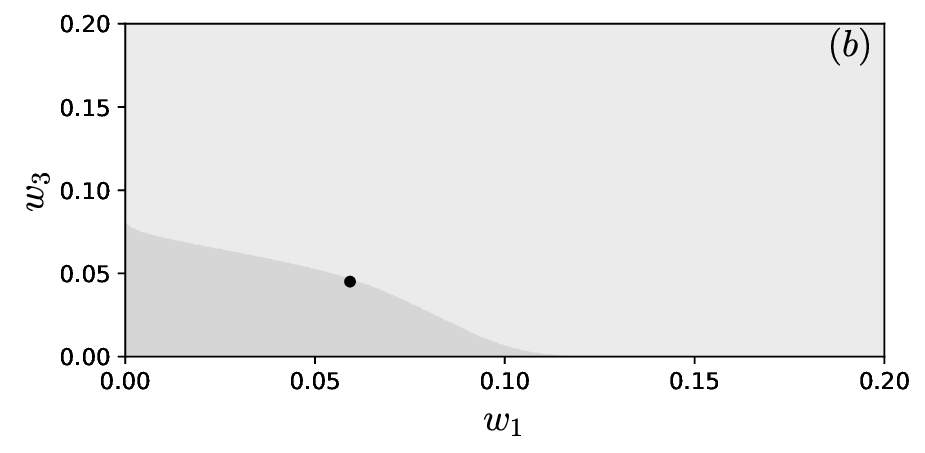}
\includegraphics[width=0.495\textwidth]{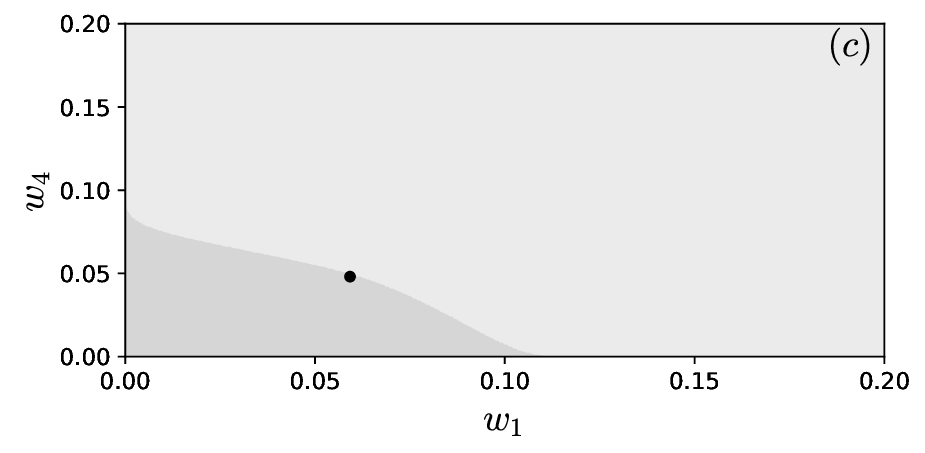}
\includegraphics[width=0.495\textwidth]{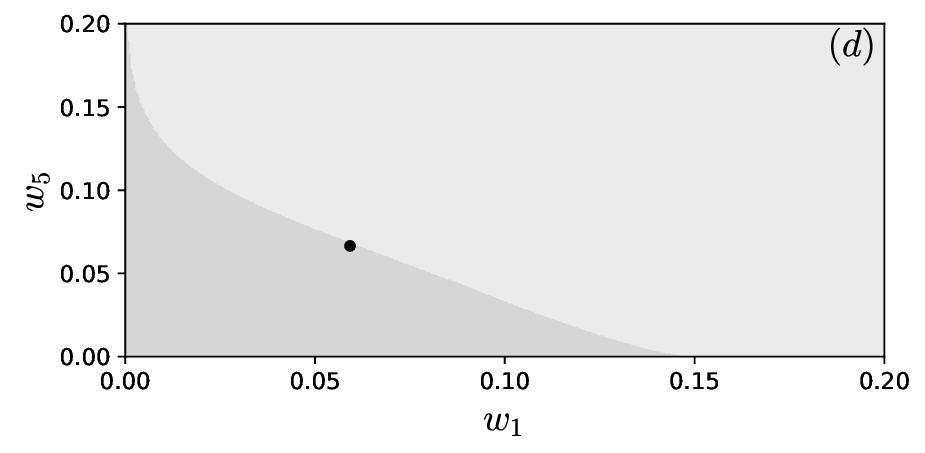}
\caption{The basin of attraction of the origin in the open hypercyclic equation (Example \ref{ex2:5}). The dot in the figures is the equilibrium $\bs{\hat w}^1$, whose unstable manifold divides the phase space into two parts. If the initial conditions belong to the darker regions then $\bs{w}(t)\to 0$. Otherwise $\bs{w}(t)$ is separated from zero (see Fig. \ref{fig2:1} for an example). The panels represent the cross sections of the phase space from which the initial conditions (the rest of of the initial conditions are taken to be equal to the coordinates of $\bs{\hat w}^1$). }\label{fig2:2}
\end{figure}
\end{example}

In addition to the details of the dynamical behavior discussed above, the coordinates of the positive equilibrium $\bs{\hat{w}}^2$, even in the case when this equilibrium is unstable, are still carrying a significant information about the averages of the solutions along the orbits (see the proof of Proposition \ref{pr:2}). This will allow us to consider the process of \textit{evolutionary adaptation} for the open replicator systems \eqref{eq2:1} (see \cite{bratus2024food,bratus2018evolution} for more details). In short, assuming the time scale separation, we can allow for the elements of matrix $\bs A$ to evolve according to a chosen criterion. This evolution will be observed as long as the evolved close system remains conditionally permanent, for which the necessary condition is to have the positive equilibrium $\bs{\hat{w}}^2$.

To specify the optimization criterion that will guide the system evolution, we define, following \cite{yegorov2020open}, the \textit{mean fitness at the equilibrium} $\bs{\hat{w}}$ as
\begin{equation}\label{eq2:3}
 \bar{f}(\bs{\hat w})=\frac{\IP{\bs{A\hat{w}}}{\bs{\hat w}}}{\IP{\bs D \bs{\hat w}}{\bs 1}}\,,\quad \bs D=\diag(d_1,\ldots,d_n).
\end{equation}
To motivate the expression in \eqref{eq2:3} we note that the numerator gives the usual mean fitness of the classical replicator equation \eqref{eq1:1}, whereas the denominator describes the damage to the system from the degradation rates; from an idealistic point of view (see, e.g., \cite{bratus2018adaptive} for a thorough discussion and references therein) the system should evolve in a way to maximize the numerator (within the allowed constrains) and minimize the denominator.

Since by summing all the equations in \eqref{eq2:1} we get
$$
\psi(\bs{\hat w})\IP{\bs{A\hat{w}}}{\bs{\hat w}}=\IP{\bs D\bs{\hat w}}{\bs 1},
$$
then we conclude that the average fitness of the open replicator equation \eqref{eq2:1} is given by
\begin{equation}\label{eq2:4}
    \bar f(\bs{\hat w})=\psi^{-1}(\bs{\hat w})=\exp(\gamma \hat S)=\exp(\gamma S(\bs{\hat w})).
\end{equation}
We expect that the adaptive evolution of the open replicator systems will be similar to the one observed in the classical replicator equation (see \cite{bratus2018evolution,drozhzhin2021fitness}). To make a next step in our analysis, we couple the replicator dynamics with another system, whose ultimate goal is to use the existing system as a resource to guarantee its own existence. We dub this second system as ``virus infection'' keeping in mind a possible biological interpretation.

\section{A mathematical model of interaction of open replicator system and virus infection}\label{sec:3}
Consider the following $n+k$ dimensional system of ordinary differential equations:
\begin{equation}\label{eq3:1}
\begin{split}
    \dot w_i&=w_i\psi(\bs w)\Bigl((\bs{Aw})_i-\alpha (\bs B \bs v)_i\Bigr)-d_iw_i,\quad w_i(0)>0,\quad i=1,\ldots, n,\\
    \dot v_j&=v_j\Bigl(-\beta v_j+(\bs B^\top \bs w)_j\Bigr),\quad v_j(0)>0,\quad j=1,\ldots,k,\quad k<n.
\end{split}
\end{equation}
Here $\bs w(t),\bs A,\,\bs d$ as before, $v_j(t)$ is the concentration of viral types of macromolecules, which we call ``virus infection,'' $j=1,\ldots, k $, $\bs v(t)=(v_1(t),\ldots,v_k(t))^\top$, $\alpha, \beta$ are positive parameters, and matrix $\bs B=[b_{ij}]$ is an $n\times k$ matrix with nonnegative elements, which describe the deleterious effect of the virus molecule of type $j$ on the replicator molecule of type~$i$. We note that since $\IP{\bs B^\top\bs w}{\bs v}=\IP{\bs w}{\bs{Bv}}\geq 0$, then Proposition \ref{pr:1} can be immediately generalized to system \eqref{eq3:1}, i.e, the solutions to this system starting in the nonnegative cone $\R^{n}_+\times \R^k_+$ exist, unique, and uniformly bounded, hence the system biologically meaningful.

Similar to the discussion above, we are looking primarily to the positive equilibria of \eqref{eq3:1}. Equilibria $(\bs{\hat w},\bs{\hat v})$ of \eqref{eq3:1} are given by
\begin{equation}\label{eq3:2}
    \begin{split}
    \hat w_i\psi(\bs{\hat w})\Bigl((\bs A\bs{\hat w})_i-\alpha(\bs B\bs{\hat v})_i\Bigr)-d_i\hat w_i&=0,\quad i=1,\ldots,n,\\
    \hat v_j\Bigl(-\beta \hat v_j+(\bs B^\top \bs{\hat w})_j\Bigr)&=0,\quad j=1,\ldots, k.
    \end{split}
\end{equation}
Clearly if $\bs{\hat w}=0$ then $\bs{\hat v}=0$ and the origin is asymptotically stable in $\R^n\times\R^k$. If $\bs{\hat v}=0$ then we restore the equilibria of the open replicator system \eqref{eq2:1}.

For the following analysis assume that $\bs{\hat v}\neq 0$, then from the second equation in \eqref{eq3:2} it follows
\begin{equation}\label{eq3:3}
    \bs{\hat v}=\beta^{-1}\bs B^\top\bs{\hat w}.
\end{equation}
Denoting $\bs C=\bs B\bs B^\top$ and $\sigma = \alpha/\beta$ and assuming that $\bs{\hat w}\neq 0$, \eqref{eq3:1} and \eqref{eq3:2} imply that
\begin{equation}\label{eq3:4}
    \psi(\bs{\hat w})((\bs A-\sigma \bs C)\bs{\hat w})_i=d_i,\quad i=1,\ldots,n.
\end{equation}
If $\bs A-\sigma\bs C$ is invertible then, summing the equations in \eqref{eq3:4}, we find that
$$
\psi(\bs{\hat w})S(\bs{\hat w})=\IP{(\bs A-\sigma\bs C)^{-1}\bs d}{\bs 1}=b.
$$
Similar to the analysis of the open replicator equations in Section \ref{sec:3}, we conclude that if
\begin{equation}\label{eq3:5}
    b=\IP{(\bs A-\sigma\bs C)^{-1}\bs d}{\bs 1}\leq (\gamma e)^{-1}
\end{equation}
then it is possible to have one or two nonzero equilibria, which, under some additional conditions, may be inside the positive cone. We will use the same notation $\bs{\hat{w}}^1$ for the equilibrium that satisfies the condition $S(\bs{\hat w}^1)<\gamma^{-1}$ and $\bs{\hat w}^2$ for $S(\bs{\hat w}^2)>\gamma^{-1}$. If condition \eqref{eq3:5} does not hold than the origin is globally asymptotically stable.

The key conclusion from the above analysis is that, despite the inclusion of the deleterious virus macromolecules, it is still possible to have strictly positive equilibria $(\bs{\hat w},\bs{\hat v})$. It is difficult to say anything the stability of these equilibria in the full generality, but the numerical experiments show that existence of such equilibria, even if they are unstable, can be accompanied by the presence of positive invariant set $B$ inside the cone $\R^n_+\times \R^k_+$, in other words the system remains conditionally permanent, and hence the algorithm of evolutionary adaptation can be applied to this system (see below).

We still need an optimization criterion for system \eqref{eq3:1}. For this reason we introduce the following definition.

\begin{definition}Assume that system \eqref{eq3:1} is conditionally permanent with the positive equilibrium $(\bs{\hat w},\bs{\hat v})=(\bs{\hat w},\beta^{-1}\bs B^\top\bs{\hat w})$. Then its mean fitness at this equilibrium is defined as
\begin{equation}\label{eq3:6}
    \bar f(\bs{\hat w})=\frac{\IP{\bs{A\hat w}}{\bs{\hat w}}-\sigma \IP{\bs{C\hat w}}{\bs{\hat w}}}{\IP{\bs{D \hat w}}{\bs 1}}=\frac{\IP{(\bs A-\sigma \bs C)\bs{\hat w}}{\bs{\hat w}}}{\IP{\bs{D \hat w}}{\bs 1}}\,.
\end{equation}
The mean fitness of the subsystem in coordinates $\bs v$, which we call the virus, at the same positive equilibrium is defined as
\begin{equation}\label{eq3:7}
    \bar g(\bs{\hat w})=\IP{\bs{C\hat w}}{\bs{\hat w}}.
\end{equation}
\end{definition}
\section{Evolutionary adaptation as a result of interaction of replicator system with virus infection}\label{sec:4}
As defined above (see \eqref{eq3:1}), the interaction of replicator system with a virus infection results in the dynamics with opposite interests. We assume that the replicator system, following the basic tenet of evolutionary biology (but see also \cite{birch2016natural}) aims to maximize its average fitness \eqref{eq3:6}, whereas the virus population's goal is to maximize \eqref{eq3:7}. From a mathematical point of view such problem can be formulated in terms of \textit{differential game theory} (e.g., \cite{isaacs1999differential,petrosjan1993differential}). However, for such complicated system as \eqref{eq3:1}, even a numerical solution of such differential game is unfeasible, therefore we suggest a much simpler heuristic alternative approach. This approach is based on the following assumptions, which we assume to hold:
\begin{enumerate}
\item The system \eqref{eq3:1} is conditionally permanent.
\item The evolutionary changes are described by continuous changes of the matrices $\bs A$ and $\bs B$ within the given admissible sets.
\item The time scale in which the elements of matrices change is much slower then the time scale of the active dynamics of the system (for precise mathematical formulation of the separation of time scales see, e.g., \cite{bratus2024food}).
\end{enumerate}
In \cite{bratus2024food,bratus2018evolution,drozhzhin2021fitness} it is shown that if we assume that the assumptions above hold, then the process of evolutionary adaptation (in our particular case of competition between the replicator and virus) can be described with the help of the equations for the positive equilibrium point, the elements of which depend continuously on the slow evolutionary time $\tau = \varepsilon t$, where $t$ is the fast time of our system, and $\varepsilon>0$ is a small parameter.

Assume that matrices $\bs A(\tau), \bs C(\tau)$ as defined above are smooth functions of parameter $\tau$ laying within the admissible sets
\begin{equation}\label{eq4:1}
    \sum_{i,j=1}^na_{ij}^2\leq R_A,\quad \sum_{i,j=1}^n c_{ij}^2\leq R_C,\quad R_A,R_C>0.
\end{equation}
Note that alternatively we could constrain the elements of matrix $\bs B$ by some positive constant $R_B$, and this would imply the condition on the elements of $\bs C$, below in numerical computations we use this approach.

Let $\delta \bar f(\tau),\delta \bar g(\tau),\delta \bs A(\tau), \delta \bs C(\tau), \delta\bs{\hat w}(\tau),\delta\bs{\hat v}(\tau)$ denote the main linear parts of the increments of functions $\bar f(\tau), \bar g(\tau)$, of the elements of the matrices $\bs A(\tau), \bs C(\tau)$ and of the vectors $\bs{\hat{w}}(\tau),\bs{\hat v}(\tau)$ respectively if the slow time (the evolutionary parameter) changes from $\tau$ to $\Delta \tau$, where $\Delta \tau>0$ is small enough. The following proposition holds (the proof is given in Appendix).
\begin{proposition}\label{pr4:1}Assume that at some evolutionary time $\tau$ system \eqref{eq3:1} is conditionally permanent with the positive equilibrium point $(\bs{\hat w}(\tau),\bs{\hat v}(\tau))$. Let $\bs R(\tau)=\bs A(\tau)-\sigma \bs C(\tau)$.

Then if $\delta \bs B(\tau)=0$ then the main linear part $\delta \bar f(\tau)$ of the increment of $\bar f(\tau)$ is a linear functional of the elements of $\delta\bs A(\tau)$, and is given by
\begin{equation}\label{eq4:2}
    \delta \bar f_A(\tau)=\frac{\gamma}{\psi(\bs{\hat w}(\tau))}\frac{\IP{\bs R^{-1}(\tau)\delta \bs A(\tau)\bs{\hat{w}}(\tau)}{\bs 1}}{\gamma S(\bs{\hat w}(\tau))-1}\,.
\end{equation}

If $\delta\bs A(\tau)=0$ then the main linear part $\delta \bar f(\tau)$ of the increment of $\bar f(\tau)$ is a linear functional of the elements of $\delta\bs C(\tau)$, and is given by
\begin{equation}\label{eq4:3}
    \delta \bar f_B(\tau)=\frac{\sigma}{\psi(\bs{\hat w}(\tau))}\frac{\IP{\bs R^{-1}(\tau)\delta \bs C(\tau)\bs{\hat{w}}(\tau)}{\bs 1}}{1-\gamma S(\bs{\hat w}(\tau))}\,.
\end{equation}
\end{proposition}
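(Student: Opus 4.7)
The plan is to reduce the computation of $\delta \bar f$ to the computation of $\delta S$ by using the identity $\bar f(\bs{\hat w}) = \exp(\gamma S(\bs{\hat w}))=\psi(\bs{\hat w})^{-1}$. This identity was derived in \eqref{eq2:4} for the open replicator system, and carries over verbatim to \eqref{eq3:1}: summing the $n$ equations in \eqref{eq3:2} at a positive equilibrium gives $\psi(\bs{\hat w})\IP{\bs R\bs{\hat w}}{\bs{\hat w}}=\IP{\bs{D\hat w}}{\bs 1}$, so that $\bar f(\bs{\hat w})=\psi(\bs{\hat w})^{-1}$. Differentiating this identity along the evolutionary parameter $\tau$ yields the master relation
\begin{equation*}
\delta \bar f(\tau)=\gamma\exp(\gamma S(\bs{\hat w}(\tau)))\,\delta S(\tau)=\frac{\gamma}{\psi(\bs{\hat w}(\tau))}\,\delta S(\tau),
\end{equation*}
so the whole proof comes down to computing $\delta S$ as a linear functional of $\delta \bs A$ or $\delta\bs C$.

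To obtain $\delta S$, I would linearize the vector equilibrium condition \eqref{eq3:4}, which in compact form reads $\psi(\bs{\hat w})\bs R(\tau)\bs{\hat w}(\tau)=\bs d$ with $\bs R(\tau)=\bs A(\tau)-\sigma\bs C(\tau)$. Taking first-order increments and using $\delta\psi=-\gamma\psi\,\delta S$ gives
\begin{equation*}
-\gamma\,\delta S\cdot \bs R\bs{\hat w}+\delta \bs R\cdot \bs{\hat w}+\bs R\cdot \delta\bs{\hat w}=0,
\end{equation*}
where $\delta\bs R=\delta\bs A-\sigma\,\delta\bs C$. Multiplying on the left by $\bs R^{-1}$ (which exists under the standing assumption $\det\bs R\neq 0$ from the discussion preceding \eqref{eq3:5}), and then taking the inner product with $\bs 1$, I would use two elementary ingredients: (i) $\IP{\delta\bs{\hat w}}{\bs 1}=\delta S$; and (ii) the scalar equilibrium identity $\IP{\bs R^{-1}\bs d}{\bs 1}=\psi(\bs{\hat w})S(\bs{\hat w})$ inherited from \eqref{eq3:5}. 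These together yield a scalar linear equation for $\delta S$,
\begin{equation*}
\delta S\,(1-\gamma S(\bs{\hat w}))=-\IP{\bs R^{-1}\delta\bs R\,\bs{\hat w}}{\bs 1},
\end{equation*}
which can be inverted provided $\gamma S(\bs{\hat w})\neq 1$. This non-degeneracy condition is precisely the requirement that $\bs{\hat w}$ be distinct from the ``critical'' equilibrium $S(\bs{\hat w})=\gamma^{-1}$ of \eqref{eq2:1a}/\eqref{eq3:5}, and it is automatic for the two branches $\bs{\hat w}^{1,2}$ singled out above.

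Finally, specializing to the two cases stated in the proposition is now routine. Setting $\delta\bs B(\tau)=0$ (hence $\delta\bs C=0$, so $\delta\bs R=\delta\bs A$), substituting the resulting $\delta S$ into $\delta\bar f=(\gamma/\psi)\,\delta S$, and rewriting $1-\gamma S$ as $-(\gamma S-1)$ reproduces \eqref{eq4:2}; setting $\delta\bs A=0$ (so $\delta\bs R=-\sigma\,\delta\bs C$) and substituting analogously yields \eqref{eq4:3}. The one nontrivial point to justify is the smooth dependence of $\bs{\hat w}(\tau)$ on the entries of $\bs A(\tau)$ and $\bs C(\tau)$ needed to legitimize the differentiation; this follows from the implicit function theorem applied to \eqref{eq3:4}, whose Jacobian with respect to $\bs{\hat w}$ is non-singular at a generic point of the conditionally permanent regime. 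I expect the main technical obstacle to be precisely this verification of the implicit function theorem hypotheses (and in particular ruling out the critical case $\gamma S(\bs{\hat w}(\tau))=1$ along the evolutionary trajectory), rather than the algebraic manipulations, which are essentially forced once the identity $\bar f=\psi^{-1}$ is recognized.
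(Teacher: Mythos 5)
Your proposal is correct and follows essentially the same route as the paper's proof: linearize the equilibrium condition \eqref{eq3:4}, use $\delta\psi=-\gamma\psi\,\delta S$ and $\IP{\delta\bs{\hat w}}{\bs 1}=\delta S$ to solve a scalar equation for $\delta S$, then convert via $\bar f=\psi^{-1}=\exp(\gamma S)$; your additional remarks on the implicit function theorem and the non-degeneracy condition $\gamma S(\bs{\hat w})\neq 1$ are justifications the paper simply omits. One caveat: carrying out your own substitution literally in the case $\delta\bs A=0$ (where $\delta\bs R=-\sigma\,\delta\bs C$) gives $\delta\bar f_B=\frac{\gamma\sigma}{\psi(\bs{\hat w})}\,\frac{\IP{\bs R^{-1}\delta\bs C\,\bs{\hat w}}{\bs 1}}{1-\gamma S(\bs{\hat w})}$, which carries an extra factor of $\gamma$ relative to the stated \eqref{eq4:3}; this appears to be a typo in the paper rather than a flaw in your argument, but you should not assert that the substitution reproduces \eqref{eq4:3} verbatim.
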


The constrains \eqref{eq4:1} and the expressions for the linear part of the average fitness allow us to recast the process of evolutionary competition between the replicator equation and virus into a sequence of linear programming problems, which we interpret in terms of ``attack'' and ``defence'' steps.

Let, first, assume that the virus does not change ($\delta \bs B(\tau)=0$) whereas the replicator evolves from the state $\bs A(\tau)$ to the state $\bs A(\tau)+\delta\bs A(\tau)$. Then from the first constraint in \eqref{eq4:1} it follows that
\begin{equation}\label{eq4:4}
    \sum_{i,j=1}^n a_{ij}(\tau)\delta a_{ij}(\tau)\leq 0,\quad |\delta a_{ij}(\tau)|\leq \delta_A.
\end{equation}
Since it is important to keep the positive equilibrium, we also require that
\begin{equation}\label{eq4:5}
    \bs{\hat w}(\tau)+\delta \bs{\hat{w}}(\tau)\geq 0.
\end{equation}
Therefore, on each small step $\Delta\tau$ we get the linear programming problem of maximizing $\delta\bar f_A(\tau)$ in \eqref{eq4:2} with all possible $\delta \bs A(\tau)$ are subject to the linear constraints \eqref{eq4:4} and \eqref{eq4:5}. Within the general scheme of consumer--resource interaction we call it the ``defence'' step.

Now we assume that the elements of $\bs A(\tau)$ do not change, $\delta\bs A(\tau)=0$, whereas $\delta\bs B(\tau)\neq 0$. This is the ``attack'' step. Similarly to the previous we consider the constraints
\begin{equation}\label{eq4:6}
    \sum_{i=1}^n\sum_{j=1}^k b_{ij}\delta b_{ij}(\tau)\leq 0,\quad |\delta b_{ij}(\tau)|\leq \delta_B.
\end{equation}
Now the linear programming problem reads: minimize the linear functional $\delta \bar f_B(\tau)$ in \eqref{eq4:3} subject to the linear constraints \eqref{eq4:6} and \eqref{eq4:5}.

Alternating the steps of ``defence'' and ``attack'' at each step $\Delta \tau$ we obtain a sequence of changes in the matrices $\bs A$ and $\bs B$, which correspond to the defence--attack process. It is important to mention that at each step of this process we can return to the dynamical equations \eqref{eq3:1} with the new elements of matrices $\bs A$ and $\bs B$ and study, at least numerically, the time dependent solutions, which may be quite different from the coordinates of the positive equilibrium. As the following examples show, our consumer--resource evolutionary mathematical model, in addition to the detailed internal structure, is capable of producing the generic evolutionary outcomes mentioned in the introduction.
\section{Examples}\label{sec:5}
In the first two examples we take $n=5, k=3$,
$$
\bs B=\begin{bmatrix}
        0 & 0 & 0 \\
        0 & 0 & 0 \\
        0.0896 & 0.0864 & 0 \\
        0 & 0.0278 & 0.0761 \\
        0.0944 & 0 & 0.0906 \\
      \end{bmatrix},\quad \bs d=(0.0648, 0.0577, 0.0444, 0.0439, 0.0468),
$$
$\gamma = 0.1,\,\alpha=1,\beta =0.1$.

The key difference in the examples below is that, by changing the constants $\delta_A, \delta_B$, which place the constrains on the changes of the sizes of elements of matrices $\bs A$ and $\bs B$ respectively, we can give an evolutionary advantage to either replicator or virus.

\begin{example}[Evolutionary coexistence in the consumer--resource system]\label{ex4:1}Here we take $\delta_A=10^{-5},\,\delta_B=10^{-4}$, which gives a strong evolutionary advantage to the virus. Despite this fact, as numerical experiments show, the dynamics of the replicator and virus still preserve the important properties present at the initial evolutionary time $\tau=0$. The system remains to be conditionally permanent and demonstrates long time coexistence (see Fig. \ref{fig4:1}, especially panels $(e)$ and $(f)$), with periodic oscillations of both the replicator and virus concentrations.
\begin{figure}[!th]
\centering
\includegraphics[width=0.495\textwidth]{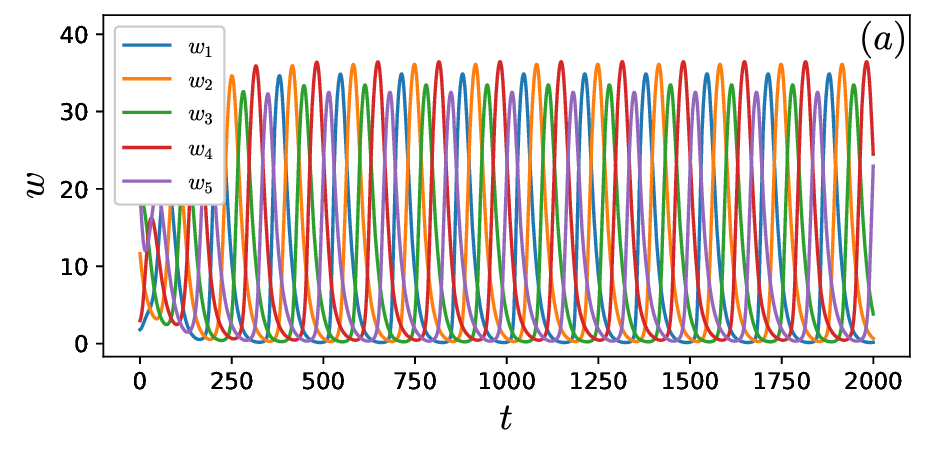}
\includegraphics[width=0.495\textwidth]{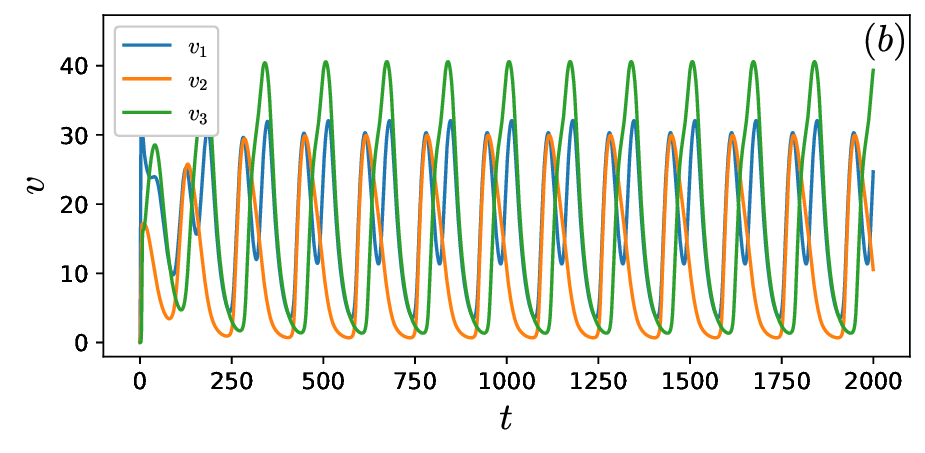}
\includegraphics[width=0.495\textwidth]{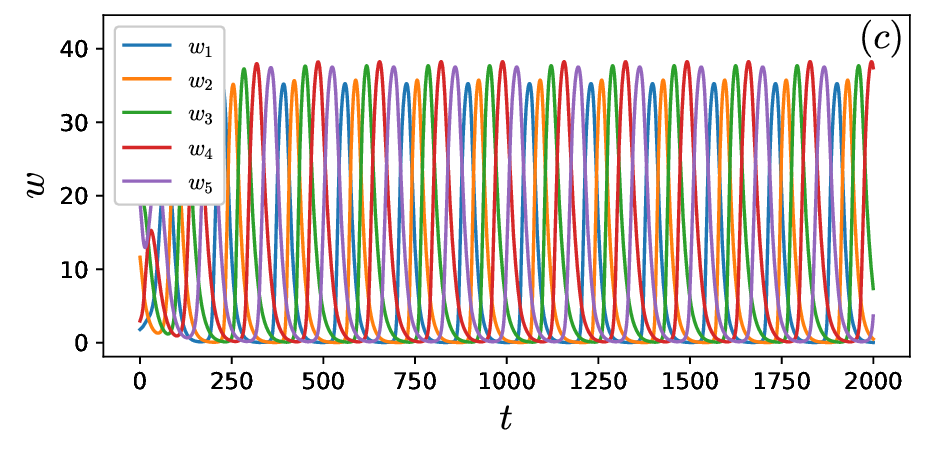}
\includegraphics[width=0.495\textwidth]{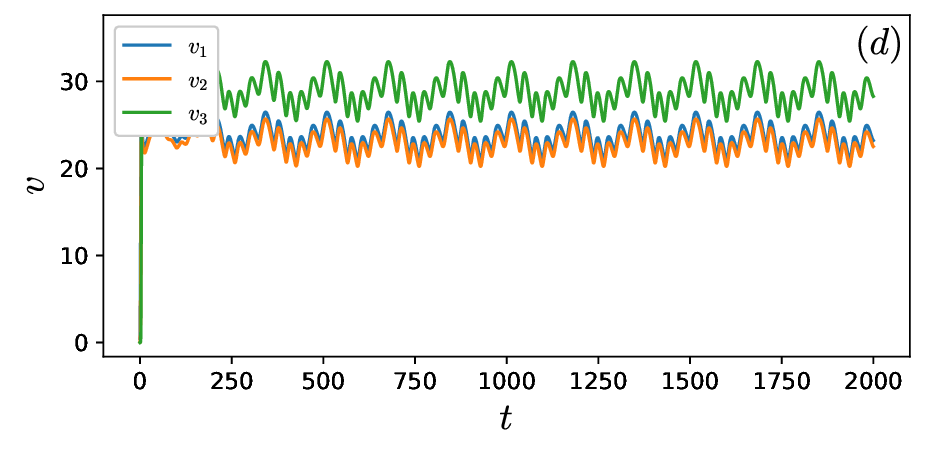}
\includegraphics[width=0.495\textwidth]{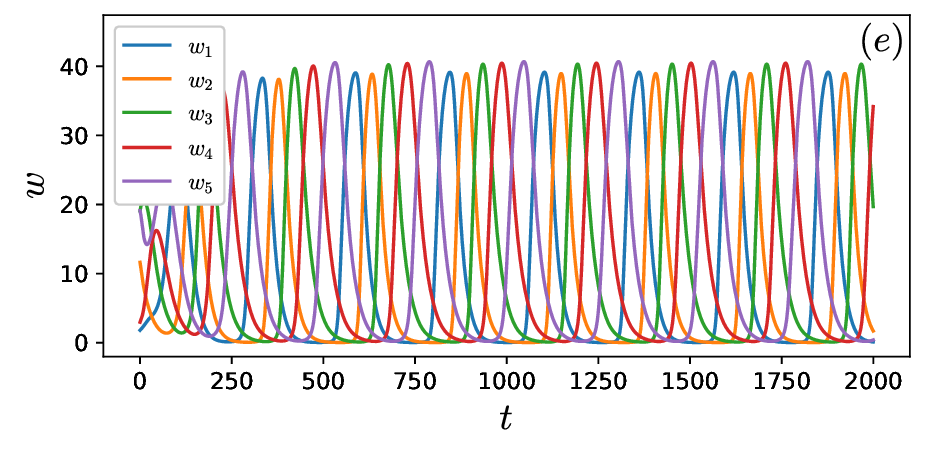}
\includegraphics[width=0.495\textwidth]{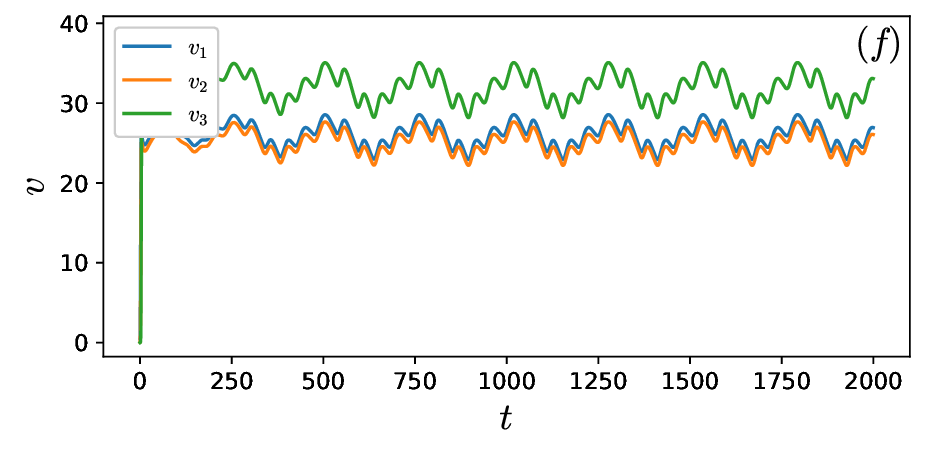}
\caption{Coexistence of replicator--virus system in Example \ref{ex4:1}. In this case the virus has an evolutionary advantage. $(a),(b)$ --- the dynamics of replicator and virus at evolutionary time $\tau=0$. $(c), (d)$ --- the dynamics of replicator and virus after 5000 evolutionary steps. $(e), (f)$ --- the dynamics of replicator and virus after 20000 evolutionary steps.}\label{fig4:1}
\end{figure}

The mean fitness of both the replicator and the virus in this experiment continue to increase with evolutionary time (as can be seen in Fig. \ref{fig4:2}, although initially the mean fitness of the replicator fast decreases, Fig. \ref{fig4:2}$(a)$). Moreover, none of the coordinates of the positive equilibrium approaches zero, Fig. \ref{fig4:3}. This is especially interesting since the general trend on longer evolutionary distances for replicator equations is to evolve most of the equilibrium coordinates close to zero, such that only one or two of them are far from zero (the obvious case of the evolution of the fittest, see many examples in \cite{bratus2018evolution,drozhzhin2021fitness}).
\begin{figure}[!th]
\centering
\includegraphics[width=0.495\textwidth]{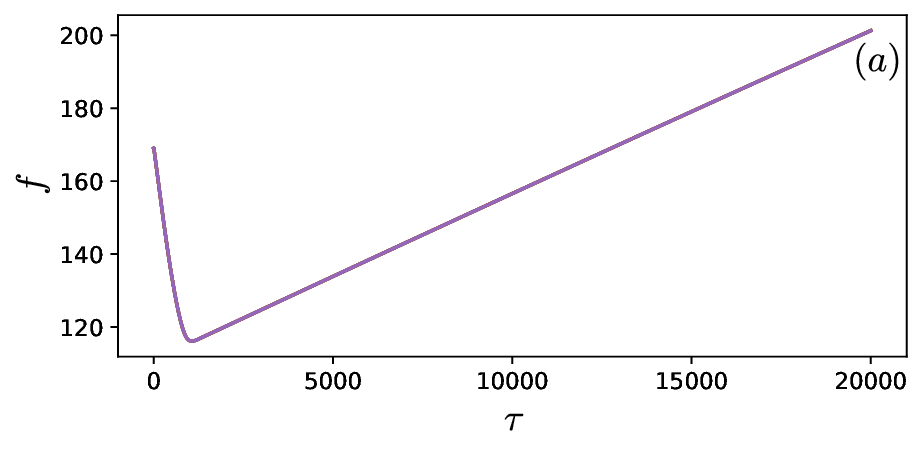}
\includegraphics[width=0.495\textwidth]{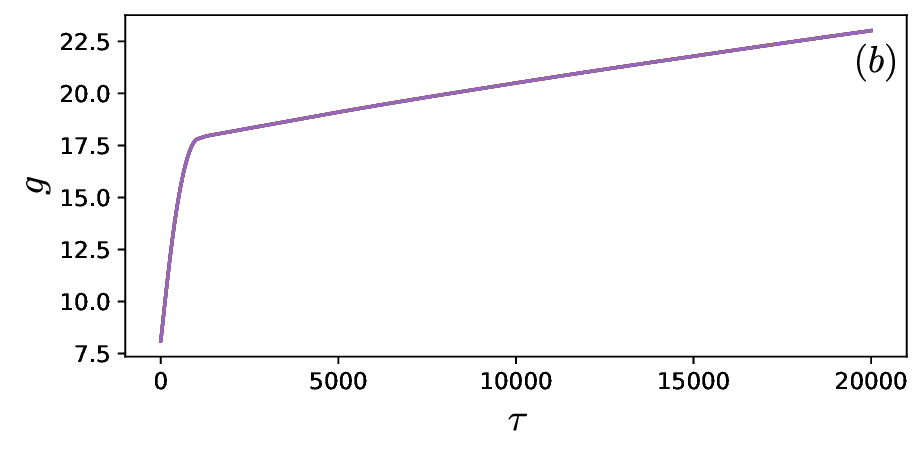}
\caption{Changes of the average fitness of replicator $(a)$ and virus $(b)$ versus evolutionary steps  in Example \ref{ex4:1}.}\label{fig4:2}
\end{figure}

\begin{figure}[!th]
\centering
\includegraphics[width=0.495\textwidth]{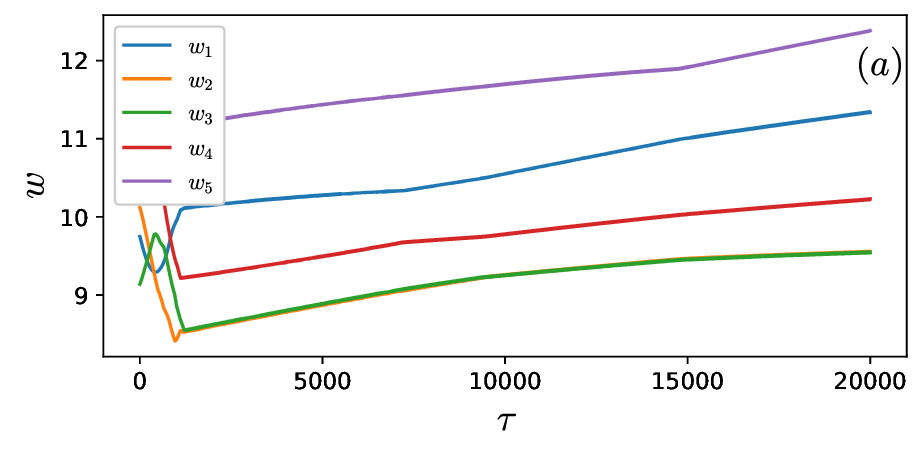}
\includegraphics[width=0.495\textwidth]{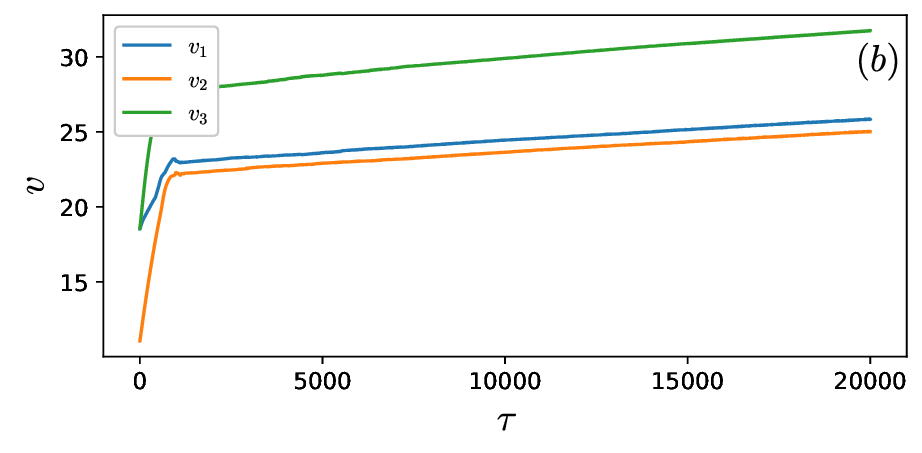}
\caption{Changes of the positive equilibria coordinates of replicator $(a)$ and virus $(b)$ versus evolutionary steps in Example \ref{ex4:1}.}\label{fig4:3}
\end{figure}

In this particular example, despite the opposite evolutionary goals of the replicator (which strives to reduce the damage caused by the parasite) and the virus (which evolves its ability to better attack the replicator on average), the system finds itself in a long term quasi-stationary state, which is acceptable to both conflict parties.

It is interesting to note that in all the examples when we observe long time coexistence, the interaction network of the replicator (the cell) becomes more and more interconnected. For instance, in Fig. \ref{fig5:1} we show the original hypercycle network at the beginning of the simulation experiment and the final state; it can be seen that after multiple steps of fitness optimization of both the replicator and the virus, the number of connections has significantly increased. We remark that similar effects in our prior work \cite{bratus2018evolution} led to the important phenomenon of the resistance to parasite invasion in replicator networks.
\begin{figure}
\centering
\includegraphics[width=0.4\textwidth]{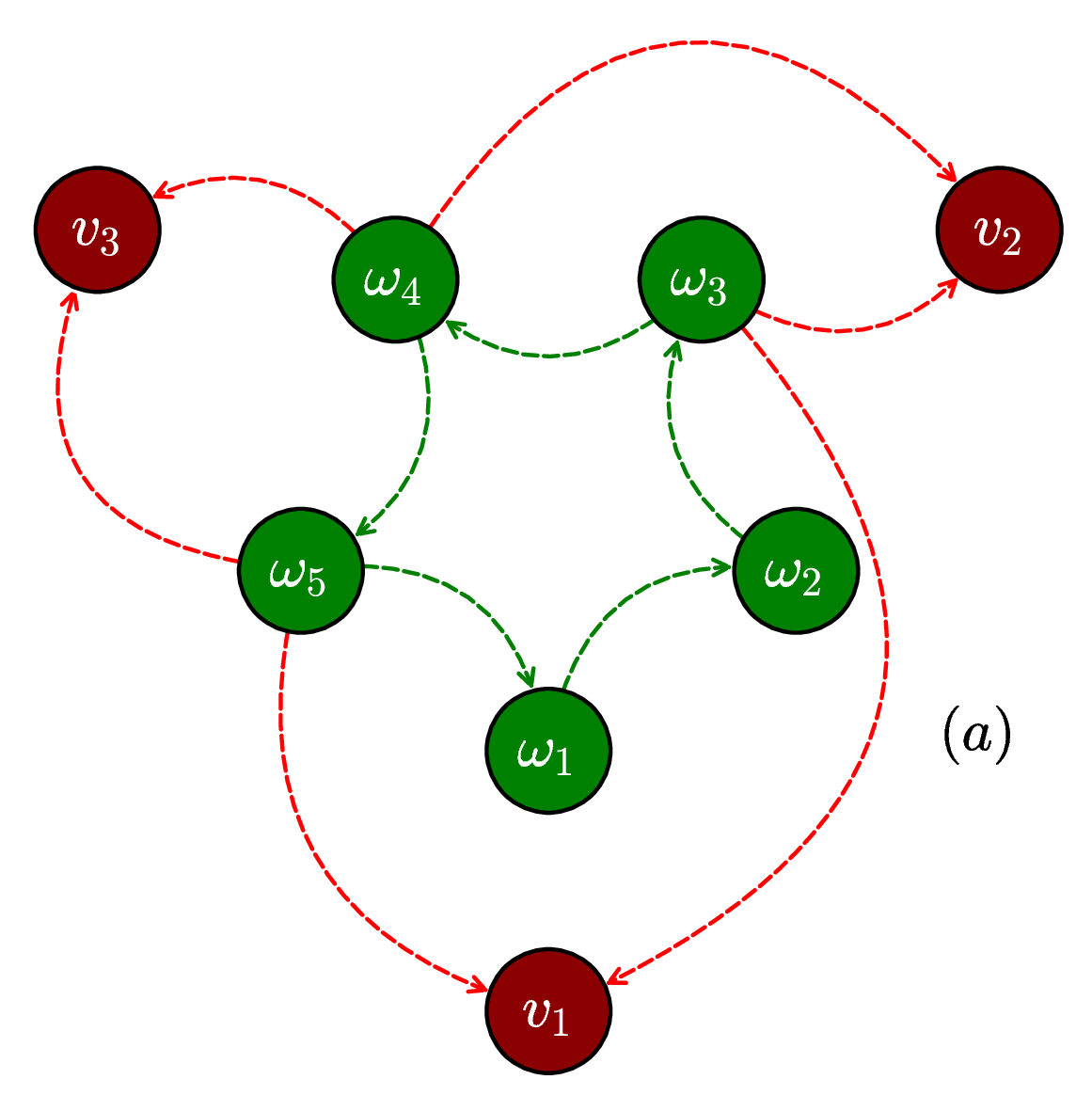}
\includegraphics[width=0.55\textwidth]{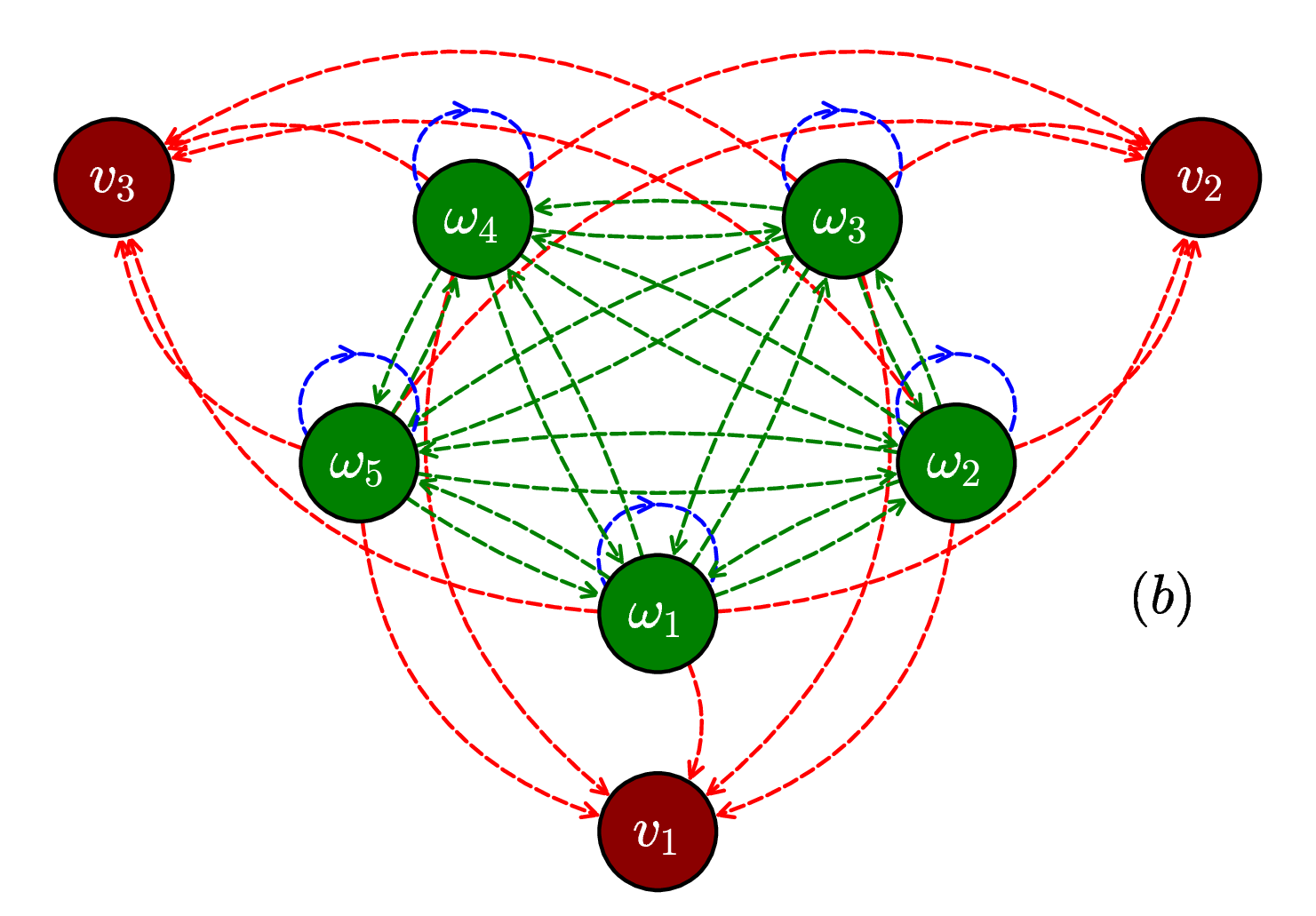}
\caption{The interaction network at the initial time point (left) and at the end point (right) in Example \ref{ex4:1}.}\label{fig5:1}
\end{figure}
\end{example}
\newpage

\begin{example}[Replicator prospers with the presence of the virus]\label{ex4:2}
In this example we take $\delta_A=10^{-3},\delta_B=10^{-4}$, which gives an evolutionary advantage to the replicator. The evolution of the replicator follows more or less the scenario, which would be expected in the situation if there was no virus present: after a sufficiently long evolutionary time only one of the replicators (the fittest one in this situation) is present (Fig. \ref{fig4:4}$(e)$). At same time, despite apparent adaptation of the replicator, the virus also survives (Fig. \ref{fig4:4}$(f)$). In other words, the complex system of both inter- and intra- population interactions leads to the situation which looks for an outside observer as a mutualistic system: both parties, quite surprisingly, benefit in the long run despite the opposite evolutionary goals.

\begin{figure}[!th]
\centering
\includegraphics[width=0.495\textwidth]{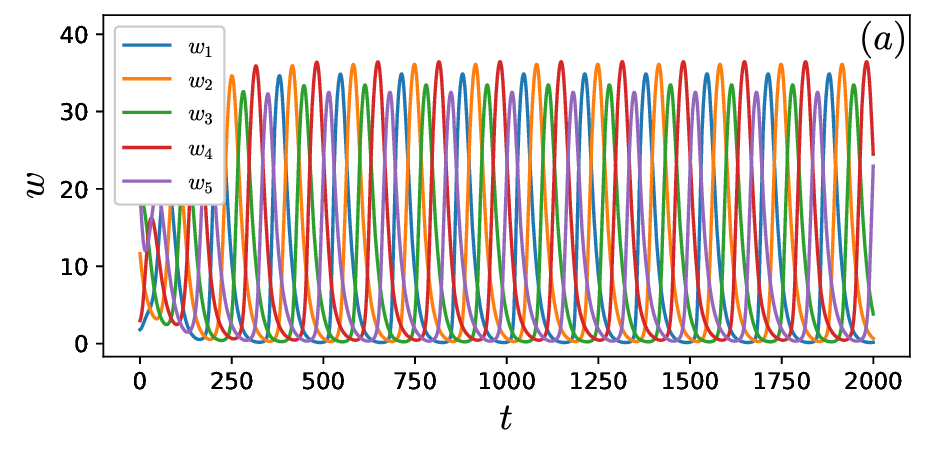}
\includegraphics[width=0.495\textwidth]{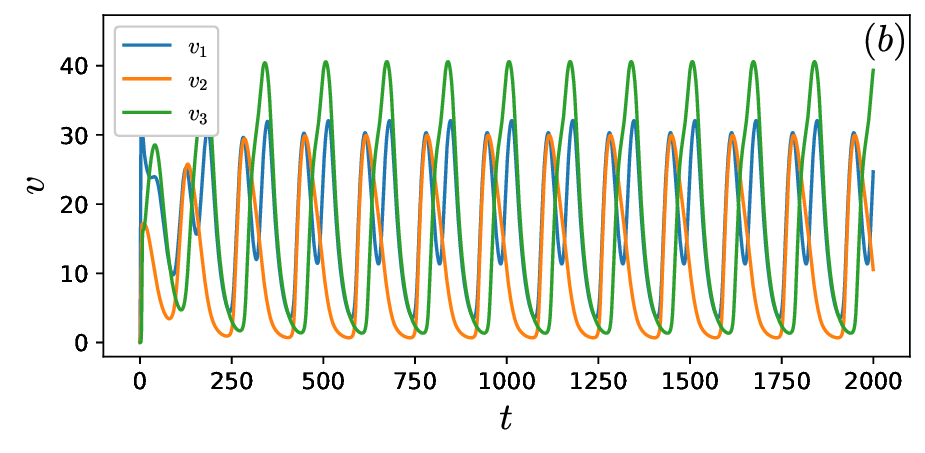}
\includegraphics[width=0.495\textwidth]{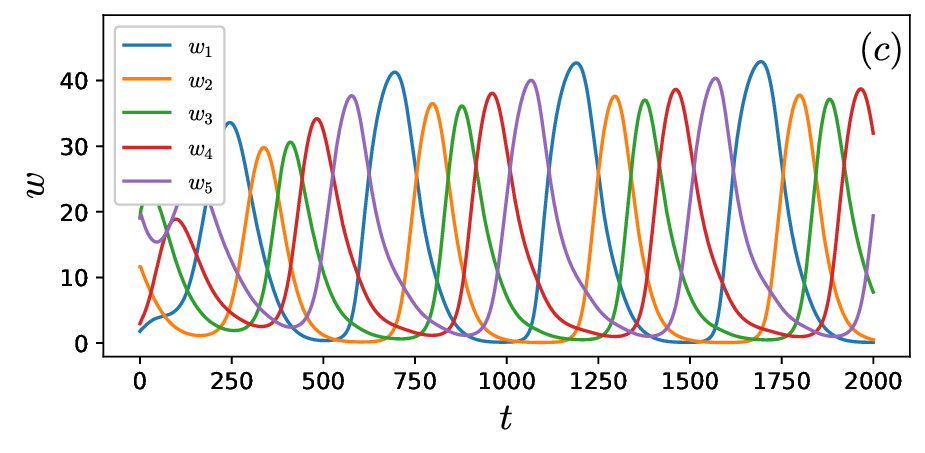}
\includegraphics[width=0.495\textwidth]{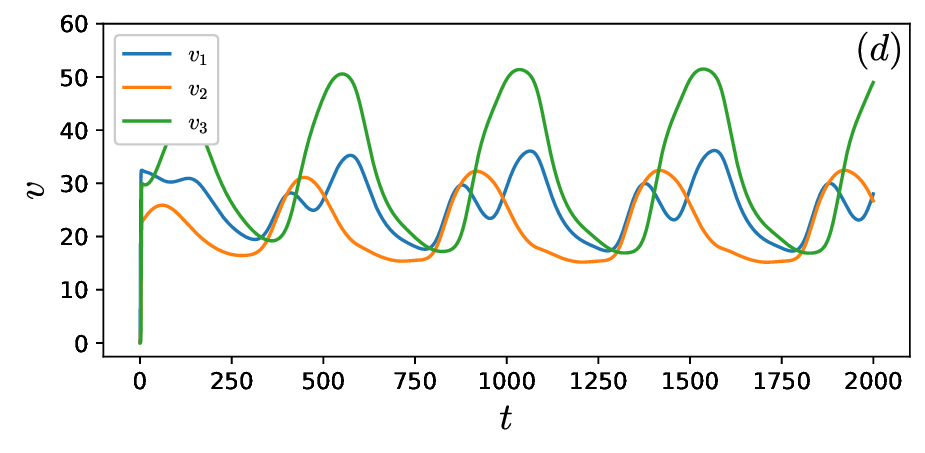}
\includegraphics[width=0.495\textwidth]{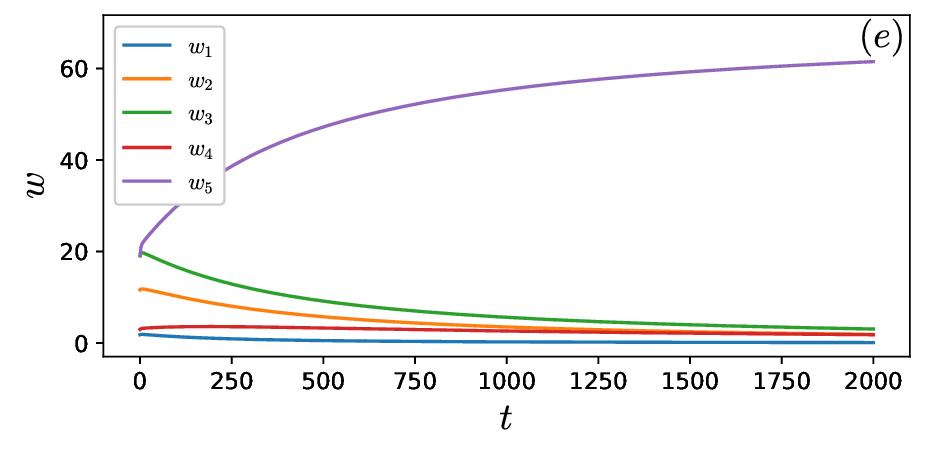}
\includegraphics[width=0.495\textwidth]{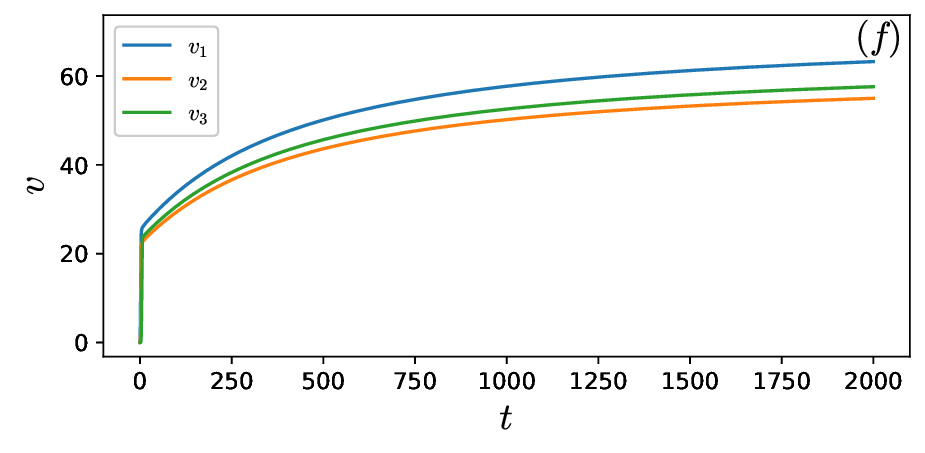}
\caption{Coevolutionary dynamics of replicator--virus system in Example \ref{ex4:2}. In this case the replicator is given an evolutionary advantage. $(a),(b)$ --- the dynamics of replicator and virus at evolutionary time $\tau=0$. $(c), (d)$ --- the dynamics of replicator and virus after 500 evolutionary steps. $(e), (f)$ --- the dynamics of replicator and virus after 2000 evolutionary steps.}\label{fig4:4}
\end{figure}

The corresponding changes in the average fitness and coordinates of positive equilibria are shown in Figs. \ref{fig4:5} and \ref{fig4:6} respectively.

\begin{figure}[!th]
\centering
\includegraphics[width=0.495\textwidth]{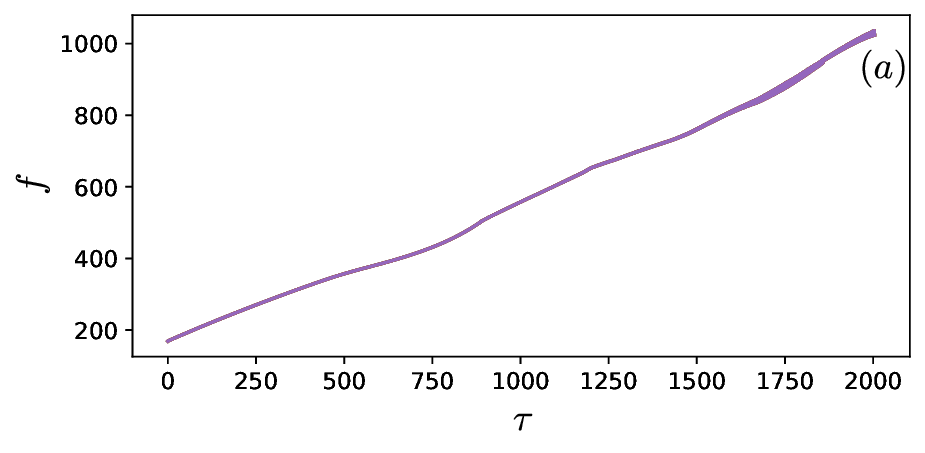}
\includegraphics[width=0.495\textwidth]{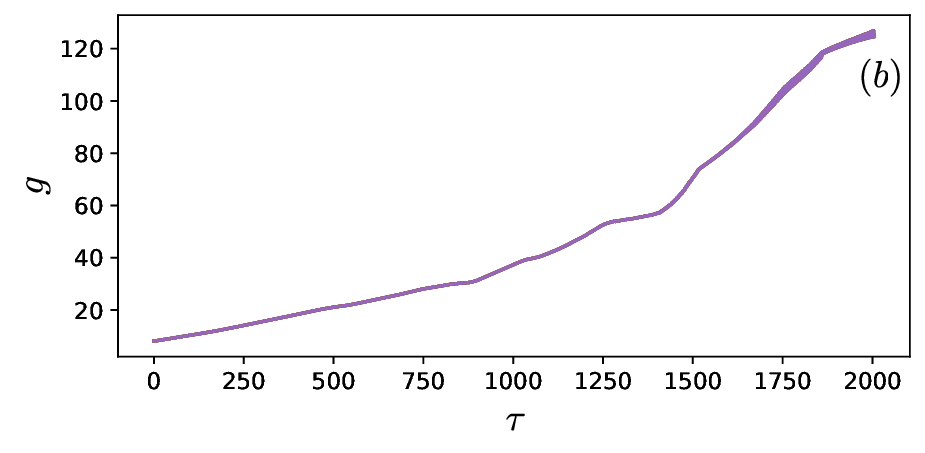}
\caption{Changes of the average fitness of replicator $(a)$ and virus $(b)$ versus evolutionary steps  in Example \ref{ex4:2}.}\label{fig4:5}
\end{figure}

\begin{figure}[!th]
\centering
\includegraphics[width=0.495\textwidth]{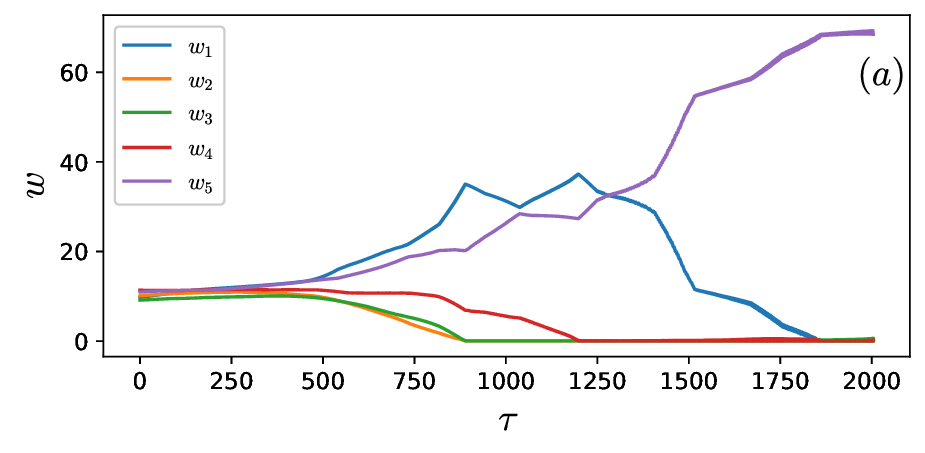}
\includegraphics[width=0.495\textwidth]{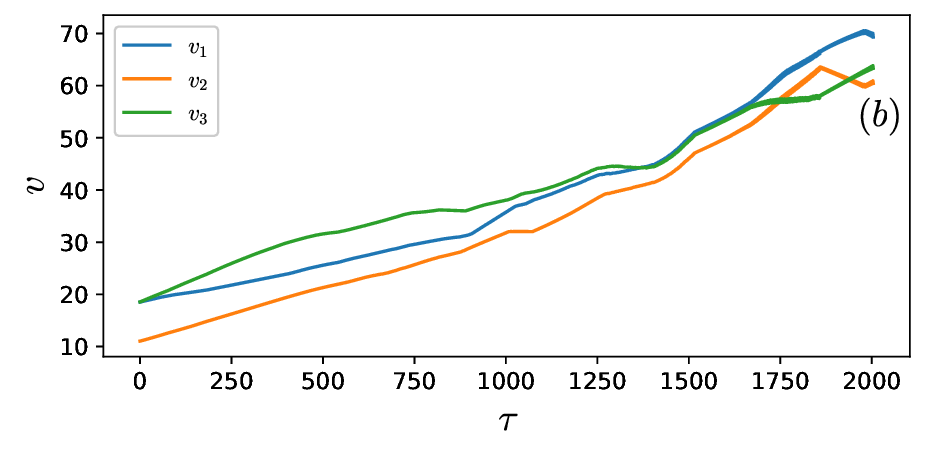}
\caption{Changes of the positive equilibria coordinates of replicator $(a)$ and virus $(b)$ versus evolutionary steps in Example \ref{ex4:2}.}\label{fig4:6}
\end{figure}

Note that in the considered example the behavior of average fitness is monotone for both the replicator and the virus.
\end{example}

\pagebreak

In the next two examples we take $k=1$, i.e., we consider only one virus species in the system to highlight the results (similar phenomena occur in high dimensional systems, not shown).

\begin{example}[The replicator dominates the virus]\label{ex4:3}Consider an example when eventually, in the result of evolutionary adaptation, the replicator system evolves to the state, in which the virus is doomed to go extinct. To make it explicit, for this example we change the initial matrix $\bs B$ to
$$
\bs B=[0,0,0.1,0,0],
$$
i.e., we choose $k=1$, and set $\delta_A=10^{-3},\delta_B=10^{-6}$ (significant evolutionary advantage of the replicator). The results of the evolutionary algorithm are shown in Fig. \ref{fig4:7}. The evolution of the replicator is similar to the one observed in the previous example: all but one replicators disappear, leaving the evolutionary space for the fittest one (Fig. \ref{fig4:7}$(e)$). The virus, however, cannot adapt as fast in the given situation, and
eventually evolves to the state, in which highly adapted replicator cannot support its existence. As a result, the virus population tends to zero (Fig. \ref{fig4:7}$(f)$). The average fitness of the virus, increasing for a short initial time period, very fast becomes decreasing, with the second derivative being negative (see Fig. \ref{fig4:8}$(b)$).
\begin{figure}[!th]
\centering
\includegraphics[width=0.495\textwidth]{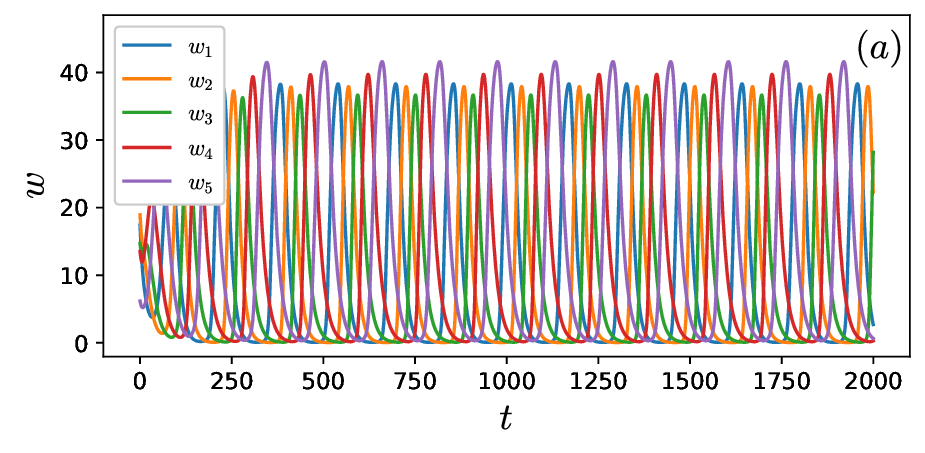}
\includegraphics[width=0.495\textwidth]{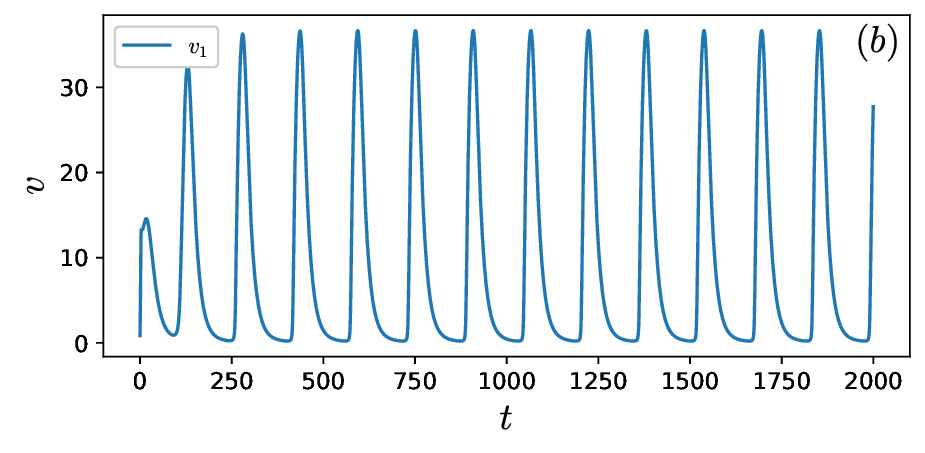}
\includegraphics[width=0.495\textwidth]{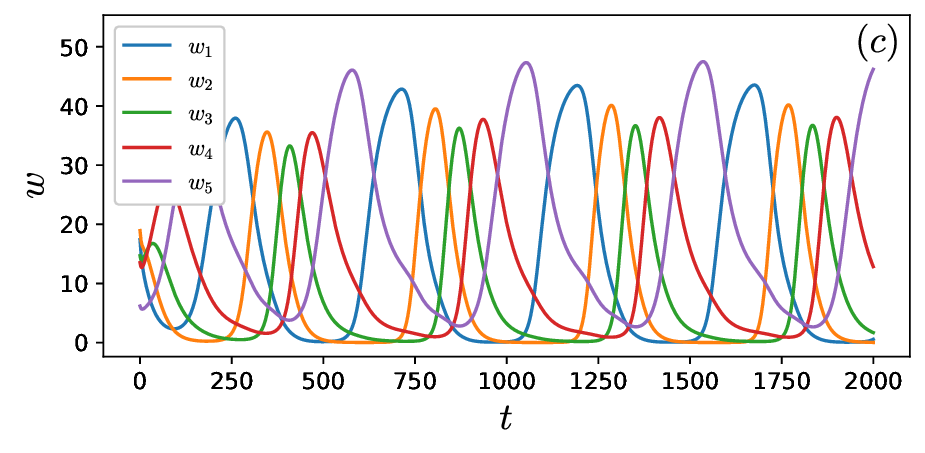}
\includegraphics[width=0.495\textwidth]{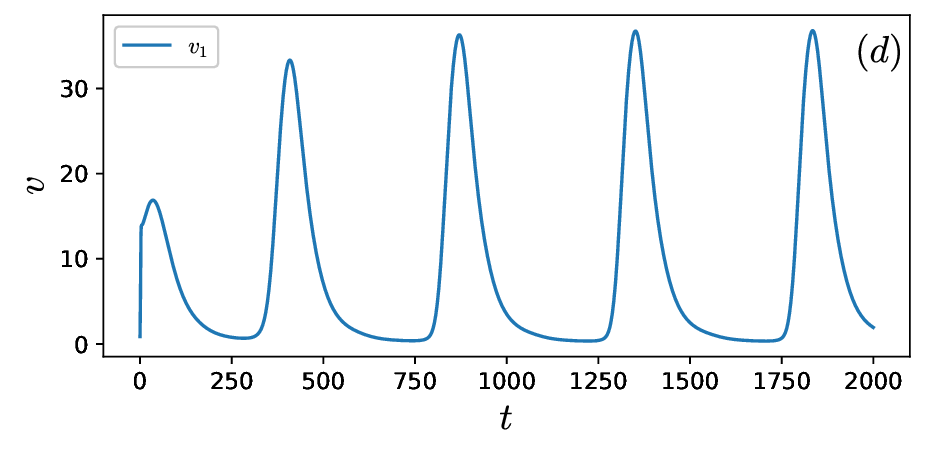}
\includegraphics[width=0.495\textwidth]{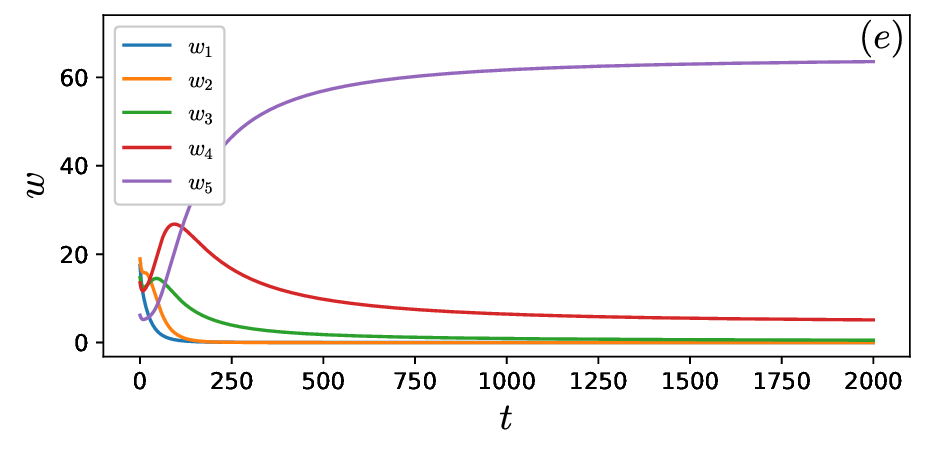}
\includegraphics[width=0.495\textwidth]{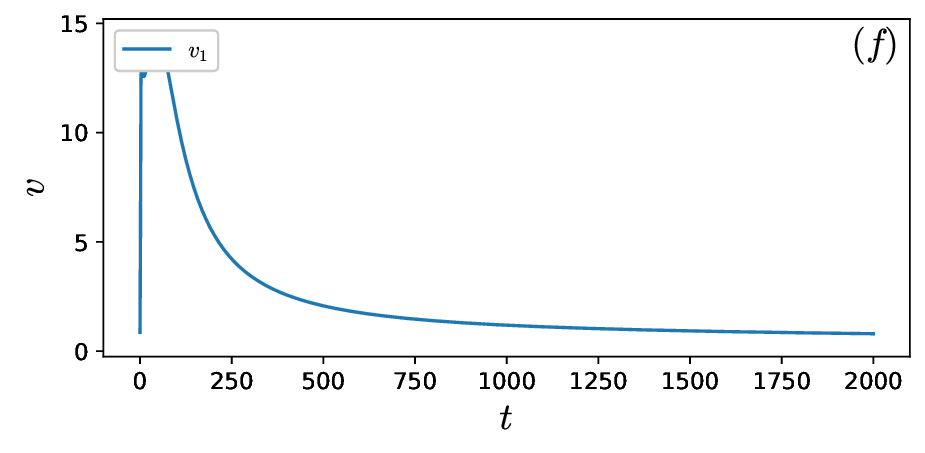}
\caption{Coevolutionary dynamics of replicator--virus system in Example \ref{ex4:3}. In this case the replicator is given a strong evolutionary advantage. $(a),(b)$ --- the dynamics of replicator and virus at evolutionary time $\tau=0$. $(c), (d)$ --- the dynamics of replicator and virus after 500 evolutionary steps. $(e), (f)$ --- the dynamics of replicator and virus after 1500 evolutionary steps.}\label{fig4:7}
\end{figure}

The corresponding changes in the average fitness and coordinates of positive equilibria are shown in Fig. \ref{fig4:8} and \ref{fig4:9} respectively.

\begin{figure}[!th]
\centering
\includegraphics[width=0.495\textwidth]{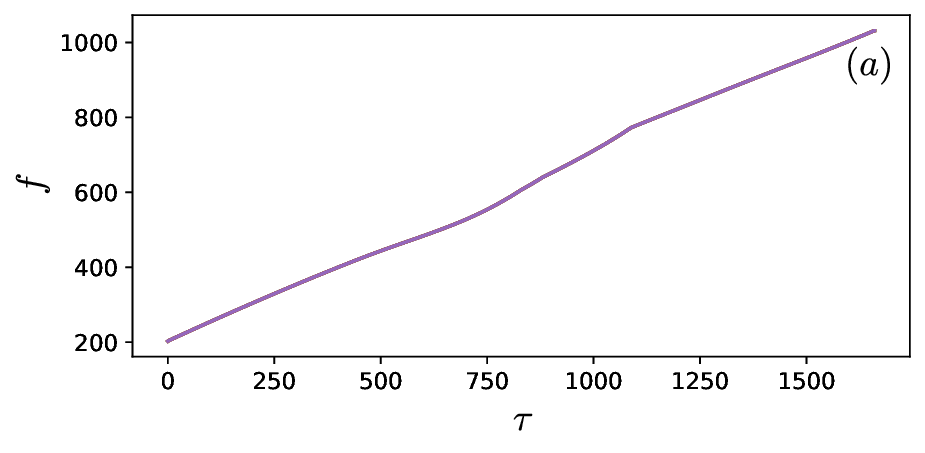}
\includegraphics[width=0.495\textwidth]{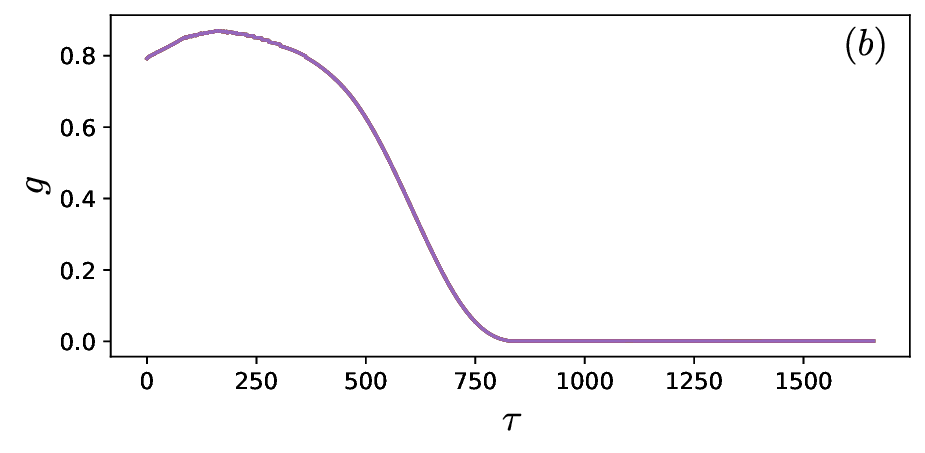}
\caption{Changes in the average fitness of replicator $(a)$ and virus $(b)$ versus evolutionary steps  in Example \ref{ex4:3}.}\label{fig4:8}
\end{figure}

\begin{figure}[!th]
\centering
\includegraphics[width=0.495\textwidth]{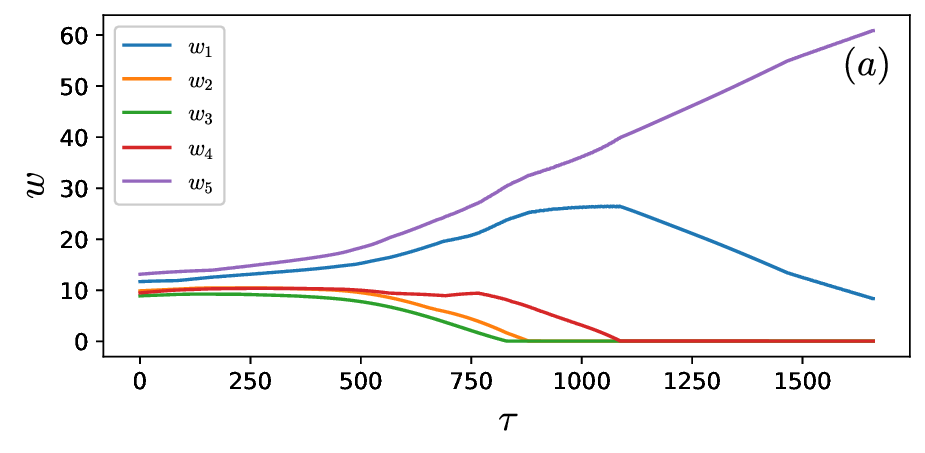}
\includegraphics[width=0.495\textwidth]{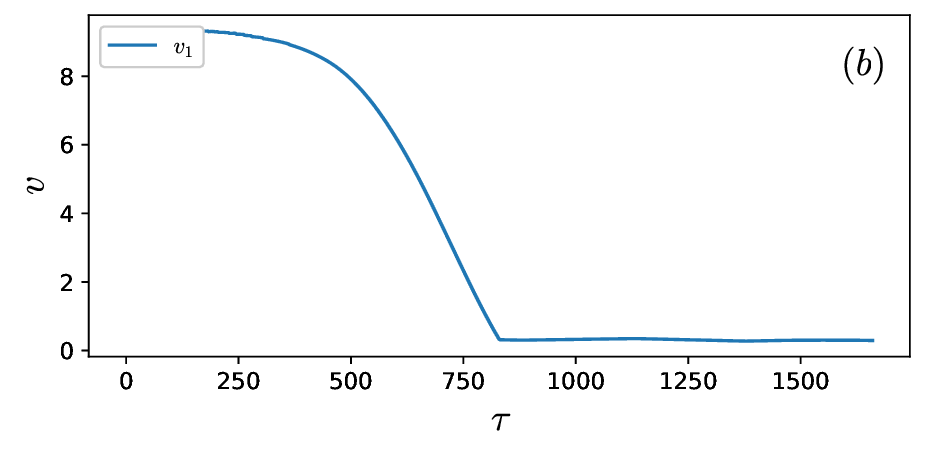}
\caption{Changes in the positive equilibria coordinates of replicator $(a)$ and virus $(b)$ versus evolutionary steps in Example \ref{ex4:3}.}\label{fig4:9}
\end{figure}

\end{example}

\pagebreak

\begin{example}[Virus dominates the replicator]\label{ex4:4}Finally consider the situation, in which we provide a strong evolutionary advantage to the virus. Here, as in the previous example, $k=1$, and $\delta_A=10^{-4},\delta_B=10^{-2}$. In other words, we alow the virus to make much more significant evolutionary steps compared with the replicator steps. In the long run such situation actually leads to eventual virus extinction, since very quickly it bring the average population fitness of the replicator to the point, at which it cannot survive (see Fig. \ref{fig4:10}, in which the changes in average fitness are shown, and Fig. \ref{fig4:11}, in which the evolution of the coordinates of the positive equilibrium is shown). In mathematical terms this particular scenario corresponds to the situation when the system stops being conditionally permanent; the ultimate outcome is total population extinction: $\bs{w}(t),\bs v(t)\to 0$ as $t\to\infty$.

\begin{figure}[!th]
\centering
\includegraphics[width=0.495\textwidth]{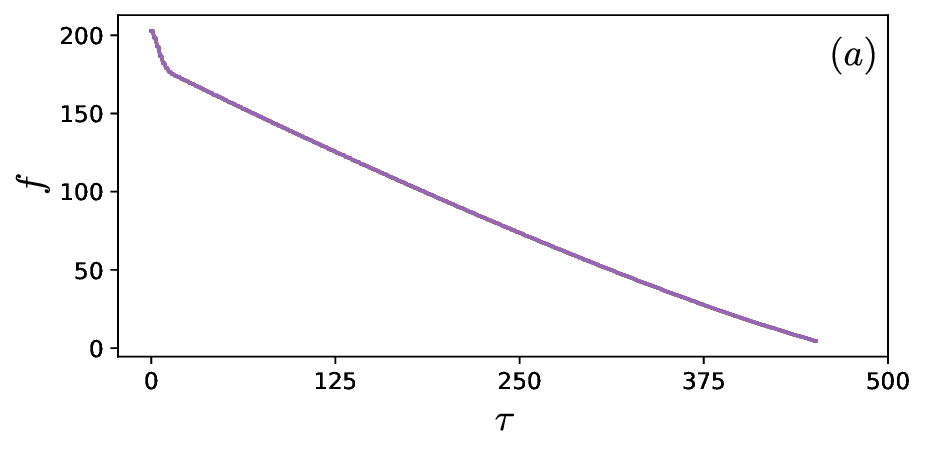}
\includegraphics[width=0.495\textwidth]{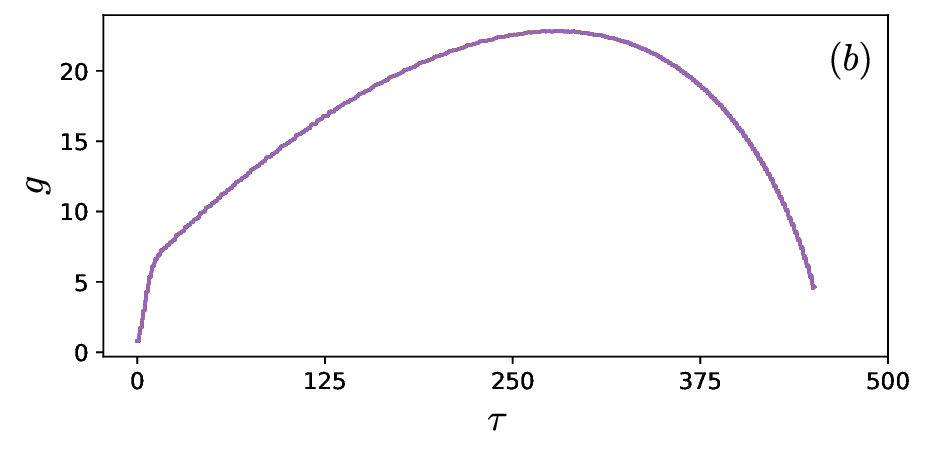}
\caption{Changes in the average fitness of replicator $(a)$ and virus $(b)$ versus evolutionary steps  in Example \ref{ex4:4}.}\label{fig4:10}
\end{figure}

\begin{figure}[!th]
\centering
\includegraphics[width=0.495\textwidth]{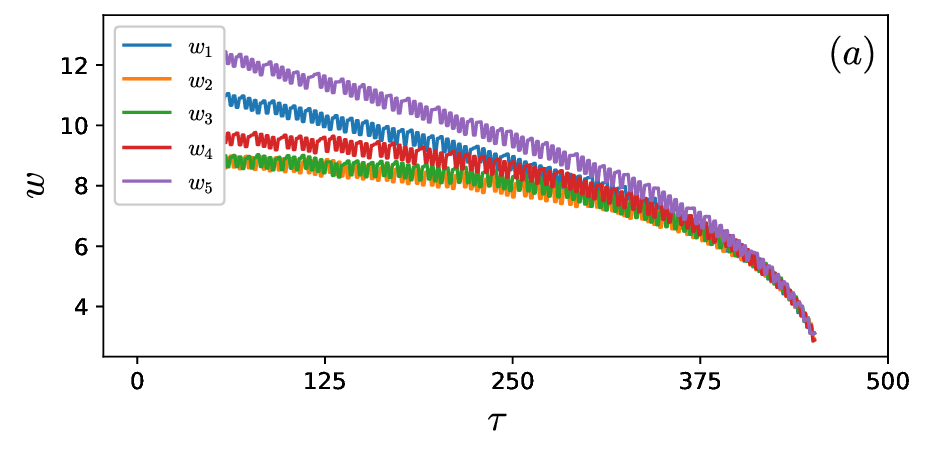}
\includegraphics[width=0.495\textwidth]{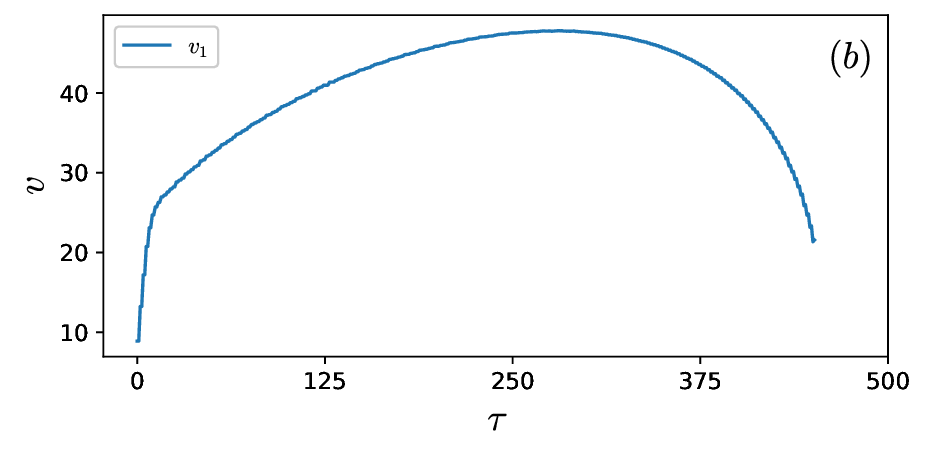}
\caption{Changes in the positive equilibria coordinates of replicator $(a)$ and virus $(b)$ versus evolutionary steps in Example \ref{ex4:4}.}\label{fig4:11}
\end{figure}

The time dependent dynamics is shown in Fig. \ref{fig4:12}. It is interesting to note that even at the previous to the last evolutionary step the system still exhibits a coexistence pattern. For an outside observer the situation does not look that dramatic (Fig. \ref{fig4:12}$(e),(f)$)). And yet one more evolutionary step (see Fig. \ref{fig4:12}$(g),(h)$) destroys the positive equilibrium completely, both the replicator and virus are attracted to zero.
\begin{figure}[!th]
\centering
\includegraphics[width=0.495\textwidth]{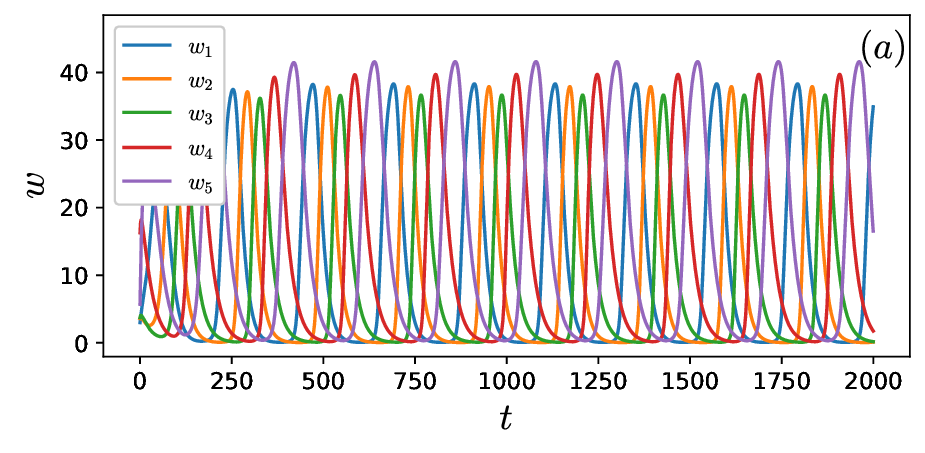}
\includegraphics[width=0.495\textwidth]{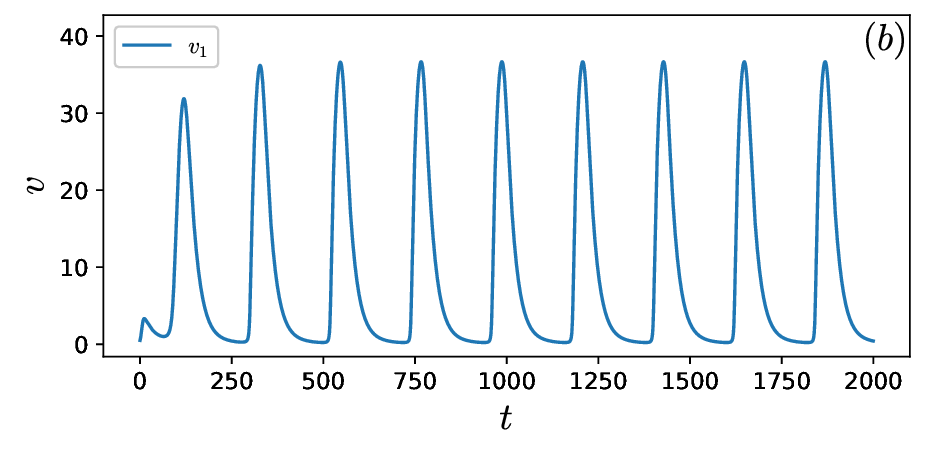}
\includegraphics[width=0.495\textwidth]{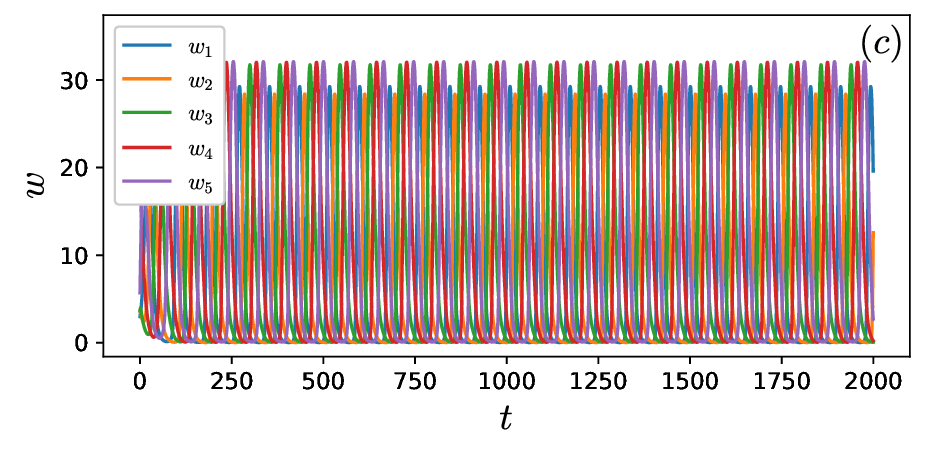}
\includegraphics[width=0.495\textwidth]{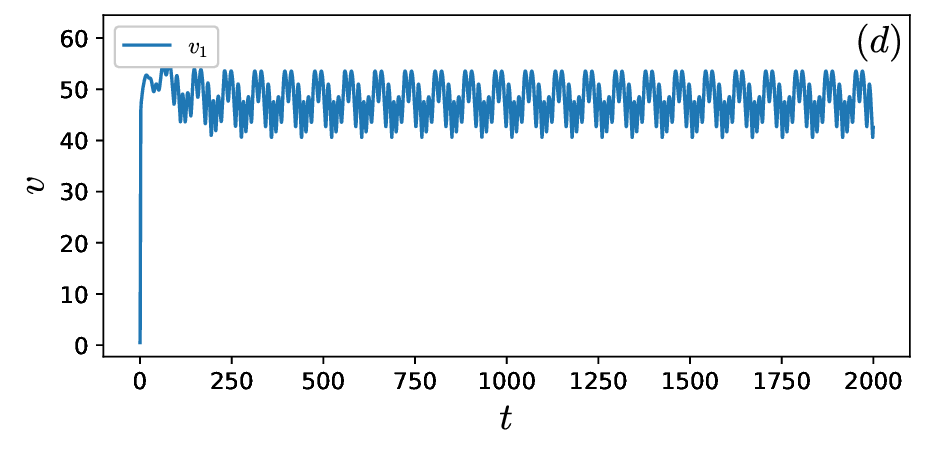}
\includegraphics[width=0.495\textwidth]{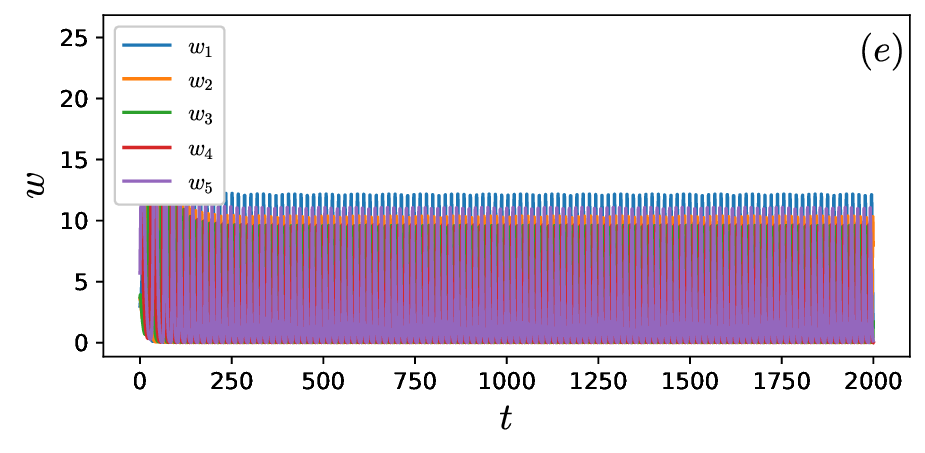}
\includegraphics[width=0.495\textwidth]{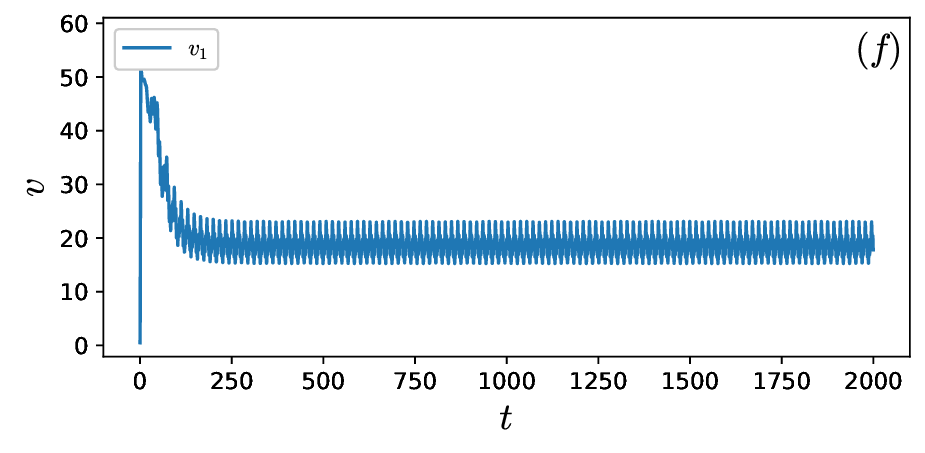}
\includegraphics[width=0.495\textwidth]{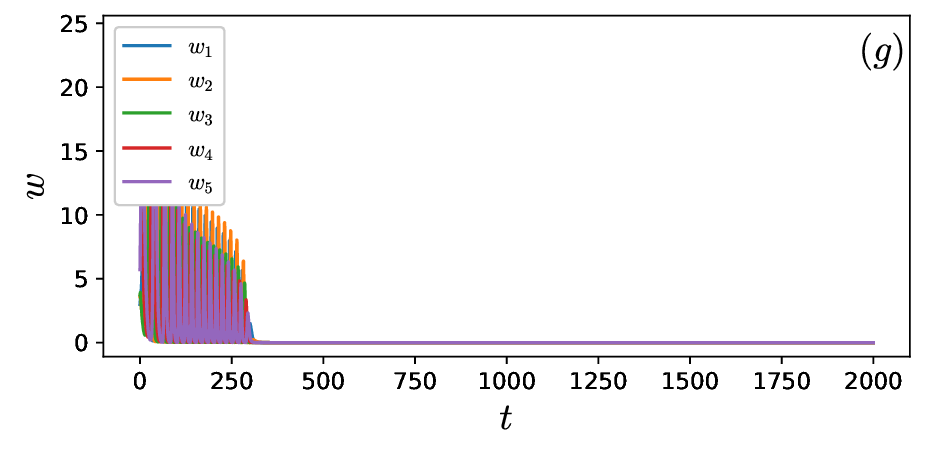}
\includegraphics[width=0.495\textwidth]{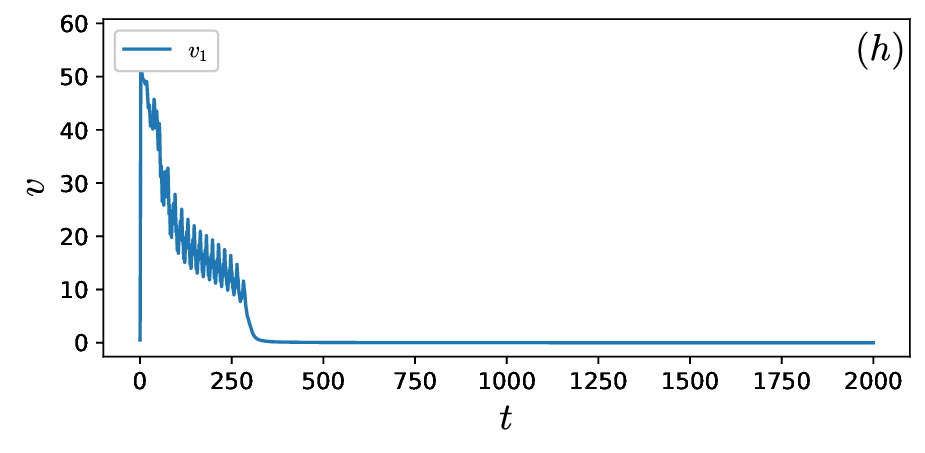}
\caption{Coevolutionary dynamics of replicator--virus system in Example \ref{ex4:4}. In this case the virus is given a strong evolutionary advantage, which eventually, through intermediate destroying the replicator, leads to a full system collapse. $(a),(b)$ --- the dynamics of replicator and virus at evolutionary time $\tau=0$. $(c), (d)$ --- the dynamics of replicator and virus after 300 evolutionary steps. $(e), (f)$ --- the dynamics of replicator and virus after 450 evolutionary steps. $(g), (h)$ --- the dynamics of replicator and virus at 451 evolutionary step (the algorithm ceases to work).}\label{fig4:12}
\end{figure}

\end{example}
\newpage

\section{Conclusions}

In this paper we introduce a novel mathematical framework to describe possible ecological and evolutionary consequences of consumer--resource dynamics. The framework is based on several key assumptions, which can be considered as minimal biologically realistic ingredients that take into account the interactions within the consumer and resource (sub)populations as well as the interactions between them.

The starting point of our framework is the classical replicator equation, which we modify to incorporate a possibility of population extinction --- a natural outcome in many consumer--resource interactions. We call such replicator equations, which are defined on the non-negative cone in $\R^n$, the open replicator systems. In our framework, these open systems represent the resource. The consumer counterpart is assumed to be fully dependent on the presence of the resource, and this motivates the specific coupling of the open replicator equation with another subsystem of ordinary differential equations, which describes the consumer reproduction process. Naturally, in the absence of resource, the consumer dies out. At this point we assume that the interaction process occurs on longer times, such that it is reasonable to consider not only ecological, but also evolutionary changes. These changes are described by the possible changes of the elements of the matrices that represent the network of interaction of replicator macromolecules and the intensities of transforming the resource concentrations into the replication rates of the consumer. From an abstract mathematical point of view we are faced with a situation that resembles a differential game in which two agents are choosing their decisions on the current behavior of the opponent and their own objective function. It is hardly feasible, even numerically, to find a solution of such mathematical problem, and therefore we replace it with a somewhat heuristic approach, in which the evolution of populations is divided into smaller steps. The key idea is based on the previously studied principle of evolutionary adaptation \cite{bratus2024food,bratus2018evolution,drozhzhin2021fitness}, the difference is that now we switch the objective function as each step, thus allowing for the consumer and the resource to evolve according to their own evolutionary goals. In short, we call this sequential process ``defence--attack.''

To summarize, we present a mathematical model of consumer--resource type with clear opposite evolutionary objectives. The model is certainly beyond any exhaustive analytical investigation, and yet is simple enough for straightforward numerical implementation. Our key finding, which is based on numerical examples, is that such basic system, with the two major parts having opposite optimization criteria, is capable of producing on its own rather counterintuitive evolutionary scenarios. Namely, together with somewhat expected possibilities of either consumer or replicator ``winning'' the evolutionary battle (Examples \ref{ex4:3} and \ref{ex4:4}), it is possible to observe quasistationary situations (Examples \ref{ex4:1} and \ref{ex4:2}), in which both the consumer and the resource coexist and steadily increase their respective fitnesses with time, as if their mutual interactions aim to support each other whereas in reality their coupling is of host--parasite type. We think that such outcomes are the consequence of the more complicated than usual modeling structure of both intra- and inter- (sup)species interactions (interactions between interactions \cite{moller2008interactions}), which tune the system in a way so that syngeneic actions of evolutionary forces, which separately aim to defeat the opponent, favor unexpected coexistence.


\section{Appendix}\label{ap:1}
\begin{proof}[Proof of Proposition \ref{pr:1}]Let $\IP{\cdot}{\cdot}$ be the standard inner product in $\R^n$. Summing all the equations in \eqref{pr:1} we find that
$$
\dot S=\phi(S)\IP{A\bs w}{\bs w}-\IP{\bs d}{\bs w},\quad S(0)=S_0>0.
$$
Since there always exists constant $K>0$ such that $\IP{A\bs w}{\bs w}\leq K \IP{\bs w}{\bs w}$, and by elementary inequality $\IP{\bs w}{\bs w}\leq S^2$ we find that
$$
\dot S\leq \phi(S)S^2-d_{\min} S.
$$
Since $\sup_{S\geq 0}\phi(S)S^2=C>0$ we conclude, by comparison theorem, that $S(t)\leq \max\{S_0,C d_{\min}^{-1}\}$, hence our solutions, taking into account the invariance of $\R^n_+$, are unique, defined for all $t>0$, and bounded from infinity.
\end{proof}

\begin{proof}[Proof of Proposition \ref{pr:2}]Since by assumption $S(t)>A>0$ for all $t>0$ we can rescale the time in \eqref{eq2:1} to obtain orbitally equivalent system
$$
\dot w_i=w_i\Bigl((\bs{A w})_i-d_i e^{\gamma S}\Bigr).
$$
Dividing by $w_i$, integrating from $0$ to $T$, and dividing by $T$ we find
$$
\frac{\log w_i(T)-\log w_i(0)}{T}=\sum_{j=1}a_{ij}\frac{1}{T}\int_0^T w_i(t)\D t-d_i\frac{1}{T}\int_0^T e^{-\gamma S(t)}\D t.
$$
Since all $w_i(t)$ are bounded there must exist accumulation points
$$
u_i=\lim_{T_k\to\infty}\frac{1}{T_k}\int_0^{T_k} w_i(t)\D t,\quad \beta=\lim_{T_k\to\infty}\frac{1}{T_k}\int_0^{T_k} e^{-\gamma S(t)}\D t,
$$
which satisfy the system
$$
u_i=(\bs{A}^{-1}\bs d)_i\beta=(\bs{A}^{-1}\bs d)_ie^{\gamma \sum_{i=1}^nu_i},
$$
which is the same as \eqref{eq2:2}.
\end{proof}

\begin{proof}[Idea of the proof of Proposition \ref{pr:3}. See \cite{pavlovich2012studying} for more details]Under the given assumptions condition \eqref{eq2:1b} takes the form
$$
(\gamma e)^{-1}<dn
$$
and the following equalities hold
$$
d=\hat{w}_{i-1}^k\exp(-\gamma S(\bs{\hat{w}}^k)),\quad i=1,\ldots,n,\quad k=1,2.
$$
By explicit calculations, the Jacobi matrix of \eqref{eq2:1} with \eqref{eq2:1c} evaluated at $\bs{\hat{w}}^{1,2}$ is the circulant matrix with the first row
$$
\left[-\gamma d^2 e^{\gamma S(\bs{\hat{w}}^k)},\ldots,-\gamma d^2 e^{\gamma S(\bs{\hat{w}}^k)}, d -\gamma d^2 e^{\gamma S(\bs{\hat{w}}^k)}\right],\quad k=1,2.
$$
The first eigenvalue $\lambda_1$ of such circulant matrix is (e.g., \cite{hofbauer1998ega})
$$
\lambda_1=d-nd^2\gamma e^{\gamma S(\bs{\hat{w}}^{k})}.
$$
If $k=1$, i.e., if $S(\bs{\hat{w}}^1)<\gamma^{-1}$ and $nd<(\gamma e)^{-1}$, then $\lambda_1>-nd^2\gamma e+d>0$, hence $\bs{\hat{w}}^1$ is always unstable. Analogous calculations for $\bs{\hat{w}}^2$ show the asymptotic stability for $n=3,4$ and instability for $n\geq 0$.
\end{proof}

\begin{proof}[Proof of Proposition \ref{pr4:1}]Let $\delta B(\tau)=0$ and $\bs R=\bs A-\alpha \bs C$. From \eqref{eq3:4} it follows that
$$
\delta\psi(\bs{\hat w})\bs R \bs{\hat{w}}+\psi(\bs{\hat w})\delta\bs A\bs{\hat{w}}+\psi(\bs{\hat w})\bs R\delta\bs{\hat w}=0.
$$
Since $\delta \psi(\bs{\hat w})=-\gamma \psi(\bs{\hat w})\delta S(\bs{\hat w})$ then
$$
\delta \bs{\hat w}=\gamma \delta S(\bs{\hat w})\bs{\hat w}-\bs R^{-1}\delta \bs A\bs{\hat w}.
$$
Therefore,
$$
\delta S(\bs{\hat w})=\IP{\delta \bs{\hat w}}{\bs 1}=\gamma \delta S(\bs{\hat w})\IP{\bs{\hat w}}{\bs 1}-\IP{\bs R^{-1}\delta \bs A\bs{\hat w}}{\bs 1},
$$
and hence
$$
\delta S(\bs{\hat w})=\frac{\IP{\bs R^{-1}\delta \bs A\bs{\hat w}}{\bs 1}}{\gamma S(\bs{\hat w})-1}\,.
$$
From equality \eqref{eq3:6} it follows that
$$
\bar{f}(\tau)=\exp(\gamma S(\bs{\hat w})).
$$
Therefore,
$$
\delta \bar{f}_A(\tau)=\psi^{-1}(\bs{\hat w})\gamma \delta S(\bs{\hat w})=\frac{\gamma}{\psi(\bs{\hat w})}\frac{\IP{\bs R^{-1}\delta \bs A\bs{\hat w}}{\bs 1}}{\gamma S(\bs{\hat w})-1}\,
$$
as claimed.

Analogous reasoning in the case $\delta\bs A(\tau)=0$ leads to \eqref{eq4:3}.
\end{proof}

\paragraph{Acknowledgements:} A.S.B. is funded by the Russian Science Foundation, grant number 23-11-
00116. A.S.B. and S.V.D are supported by the Moscow Center of Fundamental and Applied Mathematics of Lomonosov Moscow
State University under agreement No. 075-15-2025-345.


\end{document}